\newcommand{\echo}{\texttt{echo}\xspace}
\newcommand{\ack}{\texttt{ack}\xspace}
\newcommand{\propose}{\texttt{propose}\xspace}
\newcommand{\vote}{\texttt{vote}\xspace}
\newcommand{\voteone}{\texttt{vote-1}\xspace}
\newcommand{\votetwo}{\texttt{vote-2}\xspace}
\newcommand{\lock}{\texttt{lock}\xspace}
\newcommand{\latency} {good-case latency\xspace}
\newcommand{\badlatency} {bad-case latency\xspace}
\newcommand{\clockskew}{\sigma}
\newtcolorbox{mybox}[1][]{
enhanced,
colback=white,
boxsep=0pt,
#1
} 
\newcommand{\stitle}[1]{\vspace{0.5ex} \noindent\textsf{\textbf{#1}}}
\renewcommand{\paragraph}[1]{\smallskip\stitle{#1}}
\newtheorem{lemma}{Lemma}
\newtheorem{theorem}{Theorem}
\newtheorem{definition}{Definition}
\title{Good-case and Bad-case Latency of Unauthenticated Byzantine Broadcast: A Complete Categorization} 
\author[1]{Ittai Abraham}
\author[2]{Ling Ren}
\author[3]{Zhuolun Xiang}
\affil[1]{VMware Research\\ {iabraham@vmware.com}}
\affil[2, 3]{University of Illinois at Urbana-Champaign\\ {\{renling, xiangzl\}@illinois.edu}}
\begin{document}

\maketitle

\begin{abstract}
    This paper studies the {\em good-case latency} of {\em unauthenticated} Byzantine fault-tolerant broadcast, which measures the time it takes for all non-faulty parties to commit given a non-faulty broadcaster. For both asynchrony and synchrony, we show that $n\geq 4f$ is the tight resilience threshold that separates good-case 2 rounds and 3 rounds. 
    For asynchronous Byzantine reliable broadcast (BRB), we also investigate the {\em bad-case latency} for all non-faulty parties to commit when the broadcaster is faulty but some non-faulty party commits. We provide matching upper and lower bounds on the resilience threshold of bad-case latency for BRB protocols with optimal good-case latency of 2 rounds. In particular, we show 2 impossibility results and propose 4 asynchronous BRB protocols.
\end{abstract}

\section{Introduction}
\label{sec:intro}

\begin{table*}[t]
\centering
\setlength\doublerulesep{0.5pt}
\begin{tabular}{|c|c|c|c|c|}
\hline
\textbf{Problem} & \textbf{Timing Model} & \textbf{Resilience} & \textbf{Lower Bound} &
\textbf{Upper Bound} 
\\ 
\hhline{=====} 
\multirow{2}{*}{BRB} & \multirow{2}{*}{Asynchrony} & $n\geq 4f$ & $2$ rounds~\cite{abraham2021goodcase}
& \textbf{$\bm{2}$ rounds} (Thm~\ref{thm:ub:async}) \\ 
\cline{3-5} 
& & $3f+1\leq n\leq 4f-1$ & \textbf{$\bm{3}$ rounds} (Thm~\ref{thm:lowerbound})  & $3$ rounds~\cite{bracha1987asynchronous} \\
\hhline{=====} 
\multirow{2}{*}{BB} & \multirow{2}{*}{Synchrony} & $n\geq 4f$ & $2\delta$~\cite{abraham2021goodcase, dolev1990early}
& \textbf{$\bm{2\delta}$} (Thm~\ref{thm:ub:sync})  \\ \cline{3-5} 
& & $3f+1\leq n\leq 4f-1$ & $\bm{3\delta}$ (Thm~\ref{thm:sync}) & $3\delta$ (Thm~\ref{thm:sync:3r})~\cite{bracha1987asynchronous} 
\\ 
 \hline
\end{tabular}
\smallskip 
\caption{Upper and lower bounds for \latency of {\em unauthenticated} Byzantine fault-tolerant broadcast. 
}
\label{table:results}
\end{table*}

\begin{table*}[t]
\centering
\setlength\doublerulesep{0.5pt}
\begin{tabular}{|c|c|c|c|c|c|}
\hline
\textbf{Result} & \textbf{Resilience} & \textbf{Good-case} &
\textbf{Bad-case} & \textbf{Comm. cost} & \textbf{Reference}
\\ 
\hhline{======} 
Bracha & $n\geq 3f+1$ & 3 rounds & 4 rounds & $O(n^2)$ & \cite{bracha1987asynchronous} \\ 
\hline
Imbs and Raynal & $n\geq 5f+1$ & 2 rounds & 3 rounds & $O(n^2)$ & \cite{imbs2016trading} \\
\hhline{======} 
Impossibility of $(2,2)$ & $f\geq 2$ & 2 rounds & 2 rounds & - & Thm~\ref{thm:simplelb} \\
\hline
F1-BRB & $n\geq 4f, f=1$ & 2 rounds & 2 rounds & $O(n^2)$ & Thm~\ref{thm:f=1} \\
\hline
Impossibility of $(2,3)$ & $n\leq 5f-2,f\geq 3$ & 2 rounds & 3 rounds & - & Thm~\ref{thm:impossibility} \\
\hline
F2-BRB & $n\geq 4f, f=2$ & 2 rounds & 3 rounds & $O(n^3)$ & Thm~\ref{thm:f=2} \\
\hline
$(2,4)$-BRB & $n\geq 4f$ & 2 rounds & 4 rounds & $O(n^2)$ & Thm~\ref{thm:ub:async} \\
\hline
$(2,3)$-BRB & $n\geq 5f-1$ & 2 rounds & 3 rounds & $O(n^2)$ & Thm~\ref{thm:5f-1} \\
\hline
\end{tabular}
\smallskip 
\caption{Comparison of our results and previous results of asynchronous {\em unauthenticated} Byzantine reliable broadcast.}
\label{table:protocols}
\end{table*}


Byzantine fault-tolerant broadcast is a fundamental primitive in distributed systems, where a designated broadcaster sends its value to all parties, such that all non-faulty parties commit on the same value despite arbitrary deviation from Byzantine parties. Moreover, if the broadcaster is non-faulty, then all honest parties are required to commit the same value as the broadcaster's input. Byzantine broadcast (BB) requires all non-faulty parties to eventually commit, while Byzantine reliable broadcast (BRB) relaxes the condition to only require termination when the broadcaster is honest or if some non-faulty party terminates.
When the network is asynchronous, meaning the message delays are unbounded, it is well-known that BB is unsolvable with even a single fault. On the other hand, BRB is solvable under asynchrony as long as there are $n\geq 3f+1$ parties.

Recent work of Abraham et al.~\cite{abraham2021goodcase} investigates the notion of \latency of Byzantine fault-tolerant broadcast, which is the time for all honest parties to commit given that the broadcaster is honest. 
Theoretically, the \latency is a natural and interesting metric that has not been formally studied by the literature until recently; 
Practically, for applications like leader-based Byzantine fault-tolerant state machine replication (BFT SMR), the \latency study answers the fundamental question of how fast can leader-based BFT SMR commit decisions during the steady state when the leader is non-faulty. 
Moreover, for asynchronous Byzantine reliable broadcast, \latency is particularly important since BRB may not terminate under a Byzantine leader. 

The work of Abraham et al.~\cite{abraham2021goodcase} reveals a surprisingly rich structure in the \latency tight bounds for {\em authenticated} Byzantine broadcast, where digital signatures are used and the adversary is assumed to be computationally bounded. 
In this work, we study the \latency and \badlatency of {\em unauthenticated} Byzantine fault-tolerant broadcast.
Our results are summarized in Table~\ref{table:results} and \ref{table:protocols}. 

\paragraph{Complete categorization for \latency under asynchrony and synchrony.}
Under asynchrony when the message delays are unbounded, we show that $n\geq 4f$ is the tight resilience threshold that separates \latency of 2 rounds and 3 rounds. 
For $n\geq 4f$, \cite{abraham2021goodcase} shows a 2-round lower bound, and we present a protocol with \latency of 2 rounds.
For $3f+1\leq n\leq 4f-1$, Bracha's reliable broadcast~\cite{bracha1987asynchronous} has \latency of 3 rounds, and we prove a matching 3-round lower bound.

\begin{theorem}[Informal; tight bounds on \latency in asynchrony]
For unauthenticated Byzantine reliable broadcast with $f$ Byzantine parties under asynchrony, in the good-case: 
\begin{enumerate}
\item 2 rounds are necessary and sufficient if $\, n \geq 4f $ (Section~\ref{sec:async:protocol}), and   
\item 3 rounds are necessary and sufficient if $\, 3f+1 \leq n < 4f$ (Section~\ref{sec:async:lowerbound}).
\end{enumerate}
\end{theorem}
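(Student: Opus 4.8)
The statement bundles four claims, two of which are already available: the matching $2$-round lower bound for $n\ge 4f$ is due to \cite{abraham2021goodcase}, and the $3$-round upper bound for $3f+1\le n<4f$ is Bracha's classical reliable broadcast \cite{bracha1987asynchronous}. So the plan is to supply the two new halves: a protocol with \latency of $2$ rounds when $n\ge 4f$ (the sufficiency in part~1), and a proof that no BRB protocol can commit in $2$ rounds in the good case when $n<4f$ (the necessity in part~2).

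For the upper bound I would use a propose--echo skeleton. The broadcaster sends its value $v$ in round~$1$; every party that receives $v$ from the broadcaster echoes $v$ in round~$2$; and a party fast-commits $v$ as soon as it has collected $v$-echoes from $n-f$ distinct parties. Since all $n-f$ honest parties echo a single value and each honest echo reaches everyone, with an honest broadcaster and timely delivery every honest party sees $n-f$ echoes of $v$ and commits after exactly two message delays, giving \latency $2$. Agreement follows from quorum intersection: two echo-quorums of size $n-f$ meet in at least $n-2f>f$ parties (as $n\ge 3f+1$), so they share an honest party, who echoes only one value, and conflicting fast commits are therefore impossible. The subtle part is totality. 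Because the model is unauthenticated, a committed party cannot forward a verifiable certificate of the $n-f$ echoes it saw; every party must decide from messages it directly receives. I would therefore layer a Bracha-style amplification on top of the fast path---parties emit a \texttt{ready} once they see enough echoes (or $f+1$ \texttt{ready}s) and commit on a \texttt{ready}-quorum---and tune the echo and ready thresholds so that the honest echoes guaranteed behind any fast commit ($\ge n-2f$ of them) suffice to ignite the amplification at every honest party while still being too few to certify a conflicting value. Verifying that these thresholds can be met simultaneously is exactly where the resilience requirement tightens to $n\ge 4f$; this bad-case amplification is what costs the extra rounds (up to $4$) without touching the $2$-round good case.

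For the lower bound I would argue by indistinguishability. Assume a BRB protocol with \latency $2$ for some $n\le 4f-1$. In two rounds a party's decision depends only on the round-$1$ message it got from the broadcaster and the multiset of round-$2$ echoes it directly received, and---crucially in the unauthenticated model---these echoes are not transferable, so no party can import another party's evidence. Validity together with asynchrony then pins down the commit rule: if an honest party receives $v$ from the broadcaster and $v$-echoes from any $n-f$ parties, it must commit $v$, since this view is identical to a good-case execution in which the remaining $f$ parties are merely slow. I would then build a short chain of executions, starting from the honest-broadcaster-$v$ world in which a target honest party $q$ commits $v$, and successively reassigning parties from honest $v$-senders to Byzantine parties that feed $q$ exactly the same $v$-echoes while the underlying world drifts toward one whose ``correct'' value is $v'$. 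Each step is invisible to $q$, so $q$ keeps committing $v$; the point of the construction is that the Byzantine budget of $f$ is just large enough to carry the chain to completion precisely when $n\le 4f-1$, at which stage the world either has an honest broadcaster with input $v'$ (contradicting validity) or forces a second honest party to commit $v'$ while $q$ commits $v$ (contradicting agreement and totality).

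The main obstacle is the lower-bound construction: one has to choose the partition of parties and the sequence of hybrid worlds so that (i)~every consecutive pair is genuinely indistinguishable to the relevant honest party, (ii)~the number of parties reassigned to the adversary never exceeds $f$ in any single world, and (iii)~the endpoint yields a clean violation. Getting the counting to turn over exactly at the $n=4f$ boundary---matching the upper bound---is the delicate step, and it is also the place where the unauthenticated assumption is indispensable, since a single transferable signed certificate would collapse the whole argument. Relative to this, proving validity, agreement, and totality of the protocol is routine threshold bookkeeping once the thresholds are fixed.
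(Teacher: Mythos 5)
Your decomposition matches the paper's: cite the known $2$-round lower bound of \cite{abraham2021goodcase} and Bracha's $3$-round protocol, and supply the two new halves --- a $2$-round protocol for $n\geq 4f$ and a $3$-round lower bound for $n\leq 4f-1$. However, there is a concrete gap in your upper bound. You fast-commit on $v$-echoes from $n-f$ distinct parties and assert that tuning the amplification thresholds "tightens to $n\geq 4f$"; with your quorums it does not. At $n=4f$ with a Byzantine broadcaster, a fast commit on $n-f=3f$ echoes guarantees only $n-2f=2f$ honest echoes for $v$, so totality forces the amplification threshold $T$ to satisfy $T\leq 2f$; but to rule out a conflicting $v'$ reaching threshold $T$ you need the $\geq T-f$ honest echoers of $v'$ and the $\geq 2f$ honest echoers of $v$ to collide inside the $n-f=3f$ honest parties, i.e.\ $T>2f$. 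These are contradictory, so your scheme only closes at $n\geq 4f+1$, which is essentially the Imbs--Raynal regime. The paper's protocol (Figure~\ref{fig:brb}) gets to $n\geq 4f$ via a device your sketch is missing: every quorum counts only \ack/\voteone messages from \emph{non-broadcaster} parties (commit on $n-f-1$, vote on $n-2f$). When the broadcaster is Byzantine with $t$ faults total, at most $t-1$ of the counted senders are Byzantine, and this one-unit shift is exactly what makes the intersection argument of Lemma~\ref{lem:ub:async} close at $n\geq 4f$.

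Your lower bound is also not yet a proof. The opening observation --- that a $2$-round view consistent with a good-case run in which $f$ parties are merely slow forces a commit --- is sound and is the same engine the paper uses, but the "chain of hybrid worlds" that must deliver the contradiction is precisely what you leave unconstructed, and you concede that making the adversary budget close at $n=4f-1$ is the delicate step. The paper does not use a long chain: it fixes a partition of the $n-1=4f-2$ non-broadcaster parties into groups $A,B,C,D$ of sizes $f,f-1,f-1,f$ and exhibits only four executions. In Executions 3 and 4 the Byzantine broadcaster equivocates between the two halves, one small group ($C$, respectively $B$) acts Byzantine by pretending two different broadcaster histories to different audiences and going silent after round 2; group $B$ is then forced to commit $0$, group $C$ to commit $1$, and $A\cup D$ cannot distinguish the two executions, violating agreement or termination. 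Until you specify your hybrids at that level of detail --- in particular who is corrupted in each world and why the corruption count never exceeds $f$ --- the necessity half of part 2 remains open in your write-up.
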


The above asynchronous \latency bounds also imply similar results for \latency of BB and BRB under synchrony as well. Let $\delta$ denote the actual message delay bound during the execution (see Section~\ref{sec:sync} for details). 
For $n\geq 4f$, \cite{abraham2021goodcase} shows a $2\delta$ lower bound (also implied by the early-stopping results~\cite{dolev1990early}), and we present a synchronous BB protocol  with \latency of $2\delta$, inspired by our 2-round asynchronous BRB protocol.
For $3f+1\leq n\leq 4f-1$, we show a synchronous BB protocol that has \latency of $3\delta$ inspired by Bracha's reliable broadcast~\cite{bracha1987asynchronous}, and the aforementioned $3\delta$ lower bound also applies to synchrony.

\begin{theorem}[Informal; tight bounds on \latency in synchrony]
For unauthenticated Byzantine broadcast and Byzantine reliable broadcast with $f$ Byzantine parties under synchrony (message delay bounded by $\delta$), in the good-case: 
\begin{enumerate}
\item $2\delta$ are necessary and sufficient if $\, n \geq 4f $ (Section~\ref{sec:sync}), and   
\item $3\delta$ are necessary and sufficient if $\, 3f+1 \leq n < 4f$ (Section~\ref{sec:sync}).
\end{enumerate}
\end{theorem}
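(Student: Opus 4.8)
The plan is to obtain every synchronous bound as a consequence of the corresponding asynchronous statement, exploiting the fact that a synchronous execution with maximum delay $\delta$ in which the broadcaster is honest looks, from each honest party's viewpoint, like an asynchronous execution whose message chains carry timestamps: a chain of $k$ hops occupies at most $k\delta$ wall-clock time. The four claims then split into two upper bounds (sufficiency), obtained by porting protocols, and two lower bounds (necessity), obtained by porting impossibility arguments. The $2\delta$ lower bound for $n \geq 4f$ is already available from prior work and from early-stopping, so the genuinely new lower-bound content is the $3\delta$ case.

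For the upper bounds I would take the 2-round asynchronous BRB protocol for $n \geq 4f$ (Theorem~\ref{thm:ub:async}) and Bracha's 3-round protocol for $3f+1 \leq n < 4f$ and run them essentially verbatim under synchrony. With an honest broadcaster and delay bound $\delta$, each logical round of message exchange completes within $\delta$, so the good-case latency is $2\delta$ and $3\delta$ respectively. The one modification is to strengthen BRB into full BB: since asynchronous BRB need not terminate under a Byzantine broadcaster, I would add a synchronous timeout—after a fixed local deadline a party that has not yet committed commits a default value $\bot$—and then combine synchrony with the BRB totality property (if one honest party commits $v$, all do) to argue agreement and termination. The timeout is set strictly after $2\delta$ (resp.\ $3\delta$), so it never fires on the honest-broadcaster good-case path and leaves the \latency unaffected.

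For the lower bounds I would reduce to the asynchronous $k$-round lower bounds, with $k=2$ for $n\geq 4f$ and $k=3$ for $3f+1 \leq n < 4f$ (the latter being Theorem~\ref{thm:lowerbound}); these are proved for BRB and are inherited by the harder problem BB. Suppose a synchronous protocol commits, in some honest-broadcaster execution, at an honest party $p$ at a time $t < k\delta$. Consider the reference execution in which the adversary stretches every message to the full delay $\delta$; then the causal past of $p$'s decision at time $t$ contains only message chains of length at most $\lfloor t/\delta\rfloor < k$. Since every delay used equals $\delta$ and is therefore legal in synchrony, this schedule coincides with a legal asynchronous schedule on the relevant prefix, and the asynchronous impossibility supplies an indistinguishable bad execution—a different broadcaster input, or a Byzantine broadcaster that equivocates while $f$ parties are crashed or corrupted—that is itself realizable with delays at most $\delta$, forcing $p$ to commit the wrong value and contradicting agreement or validity.

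The main obstacle I anticipate is the treatment of timeouts in the lower-bound reduction. Unlike an asynchronous protocol, a synchronous protocol may base a decision on the mere passage of local time, so I must argue that the bad execution keeps every honest local clock—and hence every timeout-triggered action—identical to the reference execution up through time $t$. The point to nail down is that the indistinguishability used in the asynchronous proof can be realized with the arrival schedule pinned exactly (all links at $\delta$), so that no timeout fires at a different moment in the two executions before $p$ decides; this is precisely what lets a round-counting asynchronous bound certify a wall-clock $k\delta$ bound in synchrony.
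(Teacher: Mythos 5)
Your lower-bound half is essentially the paper's approach: the $3$-round impossibility (Theorem~\ref{thm:lowerbound}) is in fact proved by realizing the indistinguishability construction with every delay pinned at exactly $\delta$ and all parties starting together, so it is already a synchronous $3\delta$ lower bound (Theorem~\ref{thm:sync}) and carries over to BB since BB is at least as hard as BRB; your worry about timeouts is handled precisely by fixing the arrival schedule identically across the executions. The gap is in your upper bound for Byzantine broadcast. Committing a default value $\bot$ at a fixed local deadline breaks agreement: under a Byzantine broadcaster, the proposal (which is not subject to the $\delta$ bound on honest-to-honest messages) can be timed so that exactly one honest party collects $n-f-1$ \ack messages and fast-commits $v$ just before the deadline, while every other honest party times out and commits $\bot$. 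The BRB totality property only guarantees the others would commit $v$ one or two rounds later, which is too late once they have irrevocably committed $\bot$. A second, smaller issue is that you set the timeout ``strictly after $2\delta$,'' but $\delta$ is unknown to the protocol; only the conservative bound $\Delta$ may be used.

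The paper's protocol (Figure~\ref{fig:bb}) closes this gap differently: instead of defaulting to $\bot$, every party enters a full Byzantine agreement at local time $3\Delta+2\clockskew$ with a \lock value as input, and the fast path is explicitly time-gated --- a party commits at Step~\ref{bb:step:commit} only if the $n-f-1$ \ack messages arrive by local time $2\Delta+\clockskew$. This gate guarantees that whenever any honest party fast-commits $v$, the $n-2f$ honest \ack messages behind that commit reach everyone, and quorum intersection rules out a conflicting lock, before the BA starts; hence all honest parties enter the BA with input $v$ and its validity forces output $v$. The uncommitted parties then commit the BA output rather than $\bot$, which is what actually delivers agreement plus termination while leaving the $2\delta$ good case untouched. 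If you want to keep your reduction-style presentation, you must replace the timeout-to-$\bot$ step with such an agreement-on-locks step (and the time gate on the fast path); without it the argument fails.
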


\paragraph{Complete categorization for \badlatency of asynchronous Byzantine reliable broadcast.}
In addition to the good-case commit path, asynchronous BRB protocols usually have a second commit path to ensure all honest parties eventually commit, when the Byzantine broadcaster and Byzantine parties deliberately make only a few honest parties commit in the good-case commit path. We use {\em \badlatency} to denote the latency of such second commit path, and say a BRB protocol is $(R_g, R_b)$-round if it has \latency of $R_g$ rounds and \badlatency of $R_b$ rounds. 
For instance, Bracha's reliable broadcast~\cite{bracha1987asynchronous} is $(3,4)$-round. 

We provide a complete categorization of the threshold resilience for BRB with good-case latency of 2. We show two lower bound results on the resilience threshold: for $(2,2)$ and for $(2,3)$. We also show 4 protocols with matching resilience bounds: these protocols have the optimal \latency of 2 rounds, but with different trade-offs in resilience and \badlatency, matching the lower bound results. 
As summarized in Table~\ref{table:protocols}, prior upper bound results include Bracha's $(3,4)$-round BRB for $n\geq 3f+1$, and the $(2,3)$-round BRB for $n\geq 5f+1$ by Imbs and Raynal~\cite{imbs2016trading}.
\begin{itemize}
    \item First, we show it is impossible to achieve $(2,2)$-round BRB, except for the special case of $f=1$ where we propose a protocol F1-BRB that has $(2,2)$-round and optimal resilience $n\geq 4f$.
    \item Next, we show another impossibility result stating that no BRB protocol can achieve $(2,3)$-round under $n\leq 5f-2$ for $f\geq 3$. That is, for $f\geq 3$, no BRB protocol can have optimality in all three metrics: \latency, \badlatency and resilience.
    For the special case of $f=2$, we propose a protocol F2-BRB that has $(2,3)$-round and optimal resilience $n\geq 4f$.
    For the general case of $f\geq 3$, we have two protocols -- a protocol named $(2,4)$-BRB under $n\geq 4f$ that has $(2,4)$-round, and a protocol named $(2,3)$-BRB which improves the resilience of Imbs and Raynal~\cite{imbs2016trading} from $n\geq 5f+1$ to $n\geq 5f-1$ while keeping the protocol $(2,3)$-round. Both $(2,4)$-BRB and $(2,3)$-BRB have tight resilience and latencies due to the impossibility result.
\end{itemize}



\section{Preliminaries}
\label{sec:prelim}

\paragraph{Model of execution.} 
We define a protocol for a set of $n$ parties, among which at most $f$ are Byzantine faulty and can behave arbitrarily and has unbounded computational power. If a party remains non-faulty for the entire protocol execution, we call the party honest.
During an execution $E$ of a protocol, parties perform sequences of events, including {\em send, receive/deliver, local computation}. 

In this paper, we investigate results for deterministic unauthenticated protocols. If the protocol is deterministic, for any two executions, if an honest party has the same initial state and receives the same set of messages at the same corresponding time points (by its local clock), the honest party cannot distinguish two executions.
We will use the standard indistinguishability argument to prove lower bounds.

We consider both synchronous and asynchronous network models.
Under synchrony, any message between two honest parties will be delivered within $\delta$ time during the execution. 
More details about the synchrony model assumption is deferred to Section~\ref{sec:sync}.
Under asynchrony, the adversary can control the message delay of any message to be an arbitrary non-negative value.
We assume all-to-all, reliable and authenticated communication channels, such that the adversary cannot fake, modify or drop the messages sent by honest parties.

\paragraph{Byzantine broadcast variants.}
We investigate two standard variants of Byzantine broadcast problem for synchrony and asynchrony.

\begin{definition}[Byzantine Broadcast (BB)]
\label{def:bb}
A Byzantine broadcast protocol must satisfy the following properties.
    \begin{itemize}[itemsep=0pt,topsep=0pt]
        \item Agreement. If two honest parties commit values $v$ and $v'$ respectively, then $v=v'$.
        \item Validity. If the designated broadcaster is honest, then all honest parties commit the broadcaster's value and terminate.
        \item Termination. All honest parties commit and terminate.
    \end{itemize}
\end{definition}

\begin{definition}[Byzantine Reliable Broadcast (BRB)]
\label{def:brb}
A Byzantine reliable broadcast protocol must satisfy the following properties.
    \begin{itemize}[itemsep=0pt,topsep=0pt]
        \item Agreement. Same as above.
        \item Validity. Same as above.
        \item Termination. If an honest party commits a value and terminates, then all honest parties commit a value and terminate.
    \end{itemize}
\end{definition}

We will also use {\em Byzantine agreement} as a primitive to simplify the construction of our BB protocols under synchrony in Section~\ref{sec:sync}. 
The Byzantine agreement gives each party an input, and its validity requires that if all honest parties have the same input value, then all honest parties commit that value.

\paragraph{Good-case latency of broadcast.}
Depending on the network model, the measurement of latency is different. Under synchrony, we can measure the latency using the physical clock time.

\begin{definition}[Good-case Latency under Synchrony~\cite{abraham2021goodcase}]\label{def:goodcase:sync}
    A Byzantine broadcast (or Byzantine reliable broadcast) protocol has \latency of $~T$ under synchrony, if all honest parties commit within time $~T$ since the broadcaster starts the protocol (over all executions and adversarial strategies), given the designated broadcaster is honest.
\end{definition}

Under asynchrony, the network delay is unbounded.
To measure the latency of asynchronous protocols, we use the natural notion of {\em asynchronous rounds} from the literature~\cite{canetti1993fast}, where a protocol runs in $R$ asynchronous rounds if its running time is at most $R$ times the maximum message delay between honest parties during the execution.

\begin{definition}[Good-case Latency under Asynchrony~\cite{abraham2021goodcase}]\label{def:goodcase}
    A Byzantine reliable broadcast protocol has \latency of $R$ rounds under asynchrony, if all honest parties commit within asynchronous round $R$ (over all executions and adversarial strategies), given the designated broadcaster is honest.
\end{definition}

When the broadcaster is dishonest, Byzantine broadcast will have worst-case latency of $f+1$ rounds~\cite{fischer1982lower}, 
and for Byzantine reliable broadcast by definition it does not guarantee termination (the broadcaster can just remain silent).
Therefore, the notion of \latency is the natural metric to measure the latency performance of reliable broadcast. 
Another important latency metric for reliable broadcast is to measure how fast can all honest parties commit, once an honest party commit. We formally define it as the \badlatency as below.

\begin{definition}[Bad-case Latency under Asynchrony]\label{def:badcase}
    A Byzantine reliable broadcast protocol has \badlatency of $R'=R+ R_{ex}$ rounds under asynchrony, if all honest parties commit within $ R_{ex}$ asynchronous round after an honest party commits (over all executions and adversarial strategies), and the \latency of the protocol is $R$.
\end{definition}

We will use the notation $(R_g,R_b)$-round BRB to denote an authenticated Byzantine reliable broadcast protocol that has \latency or $R_g$ rounds and \badlatency of $R_b$ rounds.
For instance, the classic Bracha reliable broadcast~\cite{bracha1987asynchronous} has a \latency of $3$ rounds and a \badlatency of $4$ rounds ($ R_{ex}=1$), under $n\geq 3f+1$ parties; it is thus a $(3,4)$-round BRB.

\section{Good-case Latency of Asynchronous Byzantine Reliable Broadcast}
\label{sec:async:goodcase}
Under asynchrony, Byzantine reliable broadcast is solvable if and only if $n\geq 3f+1$.
We show the {\em tight} lower and upper bound on the \latency of asynchronous unauthenticated BRB is $2$ rounds when $n\geq 4f$, and $3$ rounds when $3f+1\leq n\leq 4f-1$.

\subsection{2-round Unauthenticated BRB under $n\geq 4f$}
\label{sec:async:protocol}
We show the tightness of the bound by presenting a $2$-round unauthenticated BRB protocol, which has \latency of $2$ rounds and \badlatency of 4 rounds with $n\geq 4f$ parties, as presented in Figure~\ref{fig:brb}. 

In the protocol, in the first round the broadcaster sends its proposal to all parties. Then in the second round, all parties send an \ack for the first proposal received. Parties commit in $2$ rounds when receiving $n-f-1$ \ack for the same value from distinct parties other than the broadcaster, which will happen when the broadcaster is honest. To ensure termination, the protocol has another 4-round commit path, to guarantee that all honest parties will commit even if the Byzantine parties deliberately make only a few honest parties commit in round $2$. The $4$-round commit path consists of a Bracha-style reliable broadcast, where the parties send \voteone and \votetwo messages upon receiving enough messages as specified in Step~\ref{rb1:step:vote}. Finally, when receiving enough \votetwo messages, party can also commit in round $4$.

\begin{figure}[tb]
    \centering
    \begin{mybox}
\begin{enumerate}
    \item\label{rb1:step:propose} \textbf{Propose.} The designated broadcaster $L$  with input $v$ sends $\langle \texttt{propose}, v\rangle$ to all  parties.

    \item\label{rb1:step:ack} \textbf{Ack.} 
    When receiving the first proposal $\langle \texttt{propose}, v\rangle$ from the broadcaster,
    a party sends an $\langle \ack, v \rangle$ message to all parties.

    \item\label{rb1:step:commit1} \textbf{2-round Commit.}
    When receiving $\langle \ack, v \rangle$ from $n-f-1$ distinct non-broadcaster parties, a party commits $v$, sends $\langle \voteone, v \rangle$ and $\langle \votetwo, v \rangle$ to all parties, and terminates.

    \item\label{rb1:step:vote} \textbf{Vote.} 
    \begin{itemize}
        \item When receiving $\langle \ack, v \rangle$ from $n-2f$ distinct non-broadcaster parties, a party sends a $\langle \voteone, v \rangle$  message to all parties, if it has not already sent \voteone.
        \item When receiving $\langle \voteone, v \rangle$ from $n-f-1$ distinct non-broadcaster parties, a party sends a $\langle \votetwo, v \rangle$ message to all parties, if it has not already sent \votetwo.
        \item When receiving $\langle \votetwo, v \rangle$ from $f+1$ distinct non-broadcaster parties, a party sends a $\langle \votetwo, v \rangle$ message to all parties, if it has not already sent \votetwo.
    \end{itemize}

    \item\label{rb1:step:commit2} \textbf{4-round Commit.}
    When receiving $\langle \votetwo, v \rangle$ from $n-f-1$ distinct non-broadcaster parties, a party commits $v$ and terminates.

\end{enumerate}
    \end{mybox}
    \caption{$(2,4)$-round BRB protocol under $n\geq 4f$ 
    }
    \label{fig:brb}
\end{figure}

\begin{lemma}\label{lem:ub:async}
    If an honest party commits $v$ at Step~\ref{rb1:step:commit1}, then no honest party will send \voteone or \votetwo for any other value $v'\neq v$.
\end{lemma}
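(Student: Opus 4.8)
The plan is to split the claim into its two halves: first rule out an honest $\voteone$ for $v'\neq v$, and then bootstrap that to rule out an honest $\votetwo$ for $v'$. The only structural ingredient I need is that an honest party sends $\ack$ for at most one value (the first proposal it receives), combined with the resilience assumption $n\geq 4f$. For the $\voteone$ half I would argue by contradiction through quorum intersection. Suppose an honest party commits $v$ in the 2-round commit step, having received $\langle\ack,v\rangle$ from a set $A$ of $n-f-1$ distinct non-broadcaster parties, and suppose some honest party sends $\voteone$ for $v'\neq v$, having received $\langle\ack,v'\rangle$ from a set $B$ of $n-2f$ distinct non-broadcaster parties. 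An honest party in $A\cap B$ would have acknowledged both $v$ and $v'$, which is impossible, so $A\cap B$ consists only of Byzantine non-broadcaster parties. Since $A\cup B$ is contained in the $n-1$ non-broadcaster parties, inclusion--exclusion gives $|A\cap B|\geq |A|+|B|-(n-1)=n-3f$, forcing at least $n-3f\geq f$ Byzantine non-broadcaster parties.

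The main obstacle is that this count is tight exactly at $n=4f$ and does not by itself close the argument there. I would therefore fork on the broadcaster. If the broadcaster is Byzantine there are at most $f-1$ Byzantine non-broadcaster parties, contradicting the bound $|A\cap B|\geq f$; and if $n>4f$ then $|A\cap B|\geq n-3f>f$ already exceeds the total number of Byzantine parties, again a contradiction regardless of the broadcaster. The surviving case is $n=4f$ with an honest broadcaster, and here I would invoke the structural property of an honest sender: it proposes a single value $v_L$, so every honest party acks only $v_L$. Because both $|A|=n-f-1$ and $|B|=n-2f$ exceed $f$, each of $A$ and $B$ must contain an honest party; the honest party in $A$ forces $v=v_L$ while the one in $B$ forces $v'=v_L$, contradicting $v\neq v'$. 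This disposes of the $\voteone$ claim in all cases.

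Finally I would bootstrap to $\votetwo$, using that no honest party ever sends $\voteone$ for $v'$. Consider the first honest party, in real time, to send $\votetwo$ for $v'$. It cannot have been triggered by the $f+1$-party $\votetwo$ rule, since that rule requires a $\langle\votetwo,v'\rangle$ from some honest party acting strictly earlier (at most $f$ of the $f+1$ senders are Byzantine), contradicting minimality. It also cannot have been triggered by the $n-f-1$-party $\voteone$ rule, since $\voteone$ messages for $v'$ come only from the at most $f$ Byzantine non-broadcaster parties, and $f<n-f-1$ whenever $n\geq 4f$. Hence no honest party sends $\votetwo$ for $v'$ either, which completes the proof.
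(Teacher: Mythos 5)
Your proof is correct and follows essentially the same strategy as the paper's: quorum intersection on the \ack sets (using that an honest party acks at most one value) to rule out an honest \voteone for $v'\neq v$, followed by the observation that every trigger threshold for \votetwo exceeds the number of Byzantine parties. The only differences are bookkeeping --- the paper cases on the broadcaster's honesty before counting (which lets it use the refined bound of $t-1$ Byzantine non-broadcasters directly), whereas you count first and dispatch the tight residual case $n=4f$ with an honest broadcaster separately, and your explicit ``first honest \votetwo sender'' argument is a more careful rendering of the paper's terse threshold remark.
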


\begin{proof}
    Since the honest party commit $v$ at Step~\ref{rb1:step:commit1}, it receives $n-f-1$ \ack messages for $v$ from distinct non-broadcaster parties.
    If the broadcaster is honest, then no honest party will send \voteone or \votetwo message for $v'$ since there are at most $f$ Byzantine parties.
    If the broadcaster is Byzantine, and suppose there are $t$ Byzantine parties, then there are at most $t-1$ Byzantine parties among all non-broadcaster parties, and there must be at least $(n-f-1)-(t-1)=n-f-t$ honest parties sending \ack for $v$. Suppose an honest party receives $n-2f$ \ack messages for $v'$ from distinct non-broadcaster parties, then there must be at least $(n-2f)-(t-1)=n-2f-t+1$ honest parties sending \ack for $v'$. 
    Since there are only $n-t$ 
    honest parties, there must be at least $(n-f-t)+(n-2f-t+1)-(n-t)=n-3f-t+1\geq n-4f+1\geq 1$ honest party that sends \ack for both $v$ and $v'$, contradiction. Hence no honest party can receive $n-2f$ \ack messages for $v'$. Moreover, since the thresholds in Step~\ref{rb1:step:vote} are larger than the number of Byzantine parties, i.e., $n-f-1\geq 3f-1>f$ and $f+1>f$, no honest party will send \voteone or \votetwo for $v'\neq v$.
\end{proof}

\begin{theorem}\label{thm:ub:async}
    The protocol in Figure~\ref{fig:brb} solves Byzantine reliable broadcast under asynchrony with optimal resilience $n\geq 4f$ and optimal \latency of 2 rounds, and has \badlatency of 4 rounds.
\end{theorem}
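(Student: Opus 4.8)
The plan is to verify the three BRB properties---Validity, Agreement, and Termination---and then the two latency claims, with the optimality of the resilience $n\geq 4f$ and of the $2$-round good case following from the matching lower bounds recorded in Table~\ref{table:results}. Validity and the \latency are the warm-up: if the broadcaster is honest with input $v$, it sends $\langle\propose,v\rangle$ to everyone, so every honest party sends \ack for $v$ in the first round, and within the second round each honest party collects the at least $n-f-1$ \ack messages for $v$ from the honest non-broadcaster parties. Hence every honest party commits $v$ at the $2$-round rule (Step~\ref{rb1:step:commit1}) by round~$2$; since an honest broadcaster only ever proposes $v$, no other value is acknowledged by honest parties, so the committed value equals the broadcaster's input.

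The heart of the proof is Agreement. First I would dispose of the $2$-round path using Lemma~\ref{lem:ub:async}: if any honest party commits $v$ at Step~\ref{rb1:step:commit1}, then no honest party ever sends \voteone or \votetwo for a value $v'\neq v$, which already pins every subsequent commit (on either path) to $v$. The remaining case is that all commits use the $4$-round rule (Step~\ref{rb1:step:commit2}), and here the subtlety is that an honest party may well send \voteone for two \emph{different} values, since for $f\geq 2$ two values can each collect the $n-2f$ \ack messages that trigger \voteone. The key invariant I would therefore establish one level up: at most one value can gather $n-f-1$ \voteone messages from distinct non-broadcaster parties. This is a quorum-intersection count---each honest party sends \voteone at most once, so two such quorums for $v\neq v'$ would require $2(n-f-t)$ distinct honest senders among only $n-t$ honest parties (with $t\leq f$ the actual number of Byzantine parties), forcing $n\leq 3f$ and contradicting $n\geq 4f$.

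To upgrade \voteone uniqueness to \votetwo uniqueness I would trace the origin of the first honest \votetwo message: it cannot come from the amplification rule (third bullet of Step~\ref{rb1:step:vote}), because $f+1$ \votetwo messages already contain an honest one, so the earliest honest \votetwo is triggered either by a Step~\ref{rb1:step:commit1} commit (handled by Lemma~\ref{lem:ub:async}) or by an $n-f-1$ \voteone quorum (unique by the invariant). An easy induction then shows every honest \votetwo, hence every $4$-round commit, is for that single value, completing Agreement. Termination and the \badlatency of $4$ rounds rest on driving every honest party to \votetwo the committed value $v$ within one round of the first commit, after which all of them collect $n-f-1$ \votetwo and commit one round later, giving $R_{ex}=2$: in the $4$-round-first case the first committer saw at least $f+1$ honest \votetwo for $v$, and the amplification rule propagates \votetwo for $v$ to all honest parties; in the $2$-round-first case the committer's at least $n-2f$ honest \ack messages for $v$ reach every honest party, driving them through the chain \ack $\to$ \voteone $\to$ \votetwo $\to$ commit.

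The main obstacle I anticipate is the Agreement argument in the pure $4$-round regime: because \voteone is not value-unique, one must push the quorum-intersection bound to the \voteone-to-\votetwo transition and simultaneously rule out the amplification rule as an independent source of a conflicting \votetwo, so the accounting for the three \votetwo-triggering rules has to be done jointly with Lemma~\ref{lem:ub:async} rather than in isolation. A secondary point of care is the \badlatency bookkeeping, where I would use the logical round notion of~\cite{canetti1993fast}: the $2$-round-first-commit case superficially looks like a three-hop cascade, and closing it in exactly two rounds relies on the observation that the triggering \ack messages are common to all honest parties and are received in the same round as by the first committer.
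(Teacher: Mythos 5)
Your proposal is correct and follows essentially the same route as the paper's: the same Lemma~\ref{lem:ub:async} disposes of any execution containing a 2-round commit, and the same quorum-intersection counts on the \ack, \voteone and \votetwo quorums (together with the observation that the first honest \votetwo cannot originate from the $f+1$ amplification rule) handle the remaining cases and the \badlatency bound. The only difference is organizational---you establish \voteone-quorum uniqueness and trace the first honest \votetwo inside the Agreement argument, whereas the paper settles Step~\ref{rb1:step:commit2} agreement by direct intersection of \votetwo quorums and places the \voteone counting bound in its Termination paragraph---but the substance is identical.
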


\begin{proof}
    {\bf Validity and Good-case Latency.}
    
    If the broadcaster is honest, it sends the same proposal of value $v$ to all parties. Then all $n-f-1$ non-broadcaster honest parties will multicast the \ack message for $v$. The Byzantine parties cannot make any honest party to send \voteone, \votetwo, for any other value $v'\neq v$ since $f$ is below any threshold specified in the protocol. All honest parties will eventually commit $v$ after receiving $n-f-1$ \ack messages at Step~\ref{rb1:step:commit1} and terminate. 
    The \latency is $2$ rounds, including broadcaster sending the proposal and all parties sending \ack message.
    
    {\bf Agreement.}
    
    If the broadcaster is honest, by validity all honest parties will commit the same value. Now consider when the broadcaster is Byzantine, there are at most $f-1$ Byzantine parties among non-broadcasters.
    
    If any two honest parties commit different values at Step~\ref{rb1:step:commit1}, then there must be at least $n-f-1-(f-1)=n-2f\geq 2f$ honest parties sending \ack for each of these different values. It is impossible by quorum intersection since there are only $3f$ honest parties. Similarly, no two honest parties can commit different values at Step~\ref{rb1:step:commit2}.
    
    Now we show that if an honest party $h1$ commits $v$ at Step~\ref{rb1:step:commit1} and another honest party $h2$ commits $v'$ at Step~\ref{rb1:step:commit2}, then it must be $v=v'$. 
    Suppose $h1$ commits $v$ at Step~\ref{rb1:step:commit1}, then by Lemma~\ref{lem:ub:async}, no honest party will send \voteone or \votetwo for $v'\neq v$, and thus not enough \votetwo for any $v'\neq v$ to be committed at Step~\ref{rb1:step:commit2}.
    Suppose $h2$ commit $v'$ at Step~\ref{rb1:step:commit2}, then $h2$ receives at least $n-f-1-(f-1)=n-2f\geq 2f$ \votetwo messages for $v'$ from honest parties. By the contrapositive of Lemma~\ref{lem:ub:async}, no honest party commits $v\neq v'$ at Step~\ref{rb1:step:commit1}.
    
    {\bf Termination and Bad-case Latency.}
    
    If the broadcaster is honest, by validity all honest parties will commit the same value. Now consider when the broadcaster is Byzantine, there are at most $f-1$ Byzantine parties among non-broadcasters.
    
    Suppose that an honest party commits $v$ at Step~\ref{rb1:step:commit1}, by Lemma~\ref{lem:ub:async}, no honest party will send \voteone or \votetwo for any $v'\neq v$. Since there are at least $n-f-1-(f-1)=n-2f$ non-broadcaster honest parties sending \ack for $v$, all honest parties will receive at least $n-2f$ \ack for $v$ from non-broadcasters, and thus send \voteone for $v$. Since there are $n-f$ honest non-broadcasters, all honest parties will send \votetwo for $v$, and then commit after receiving $n-f-1$ \votetwo messages.
    
    Suppose that an honest party commits $v$ at Step~\ref{rb1:step:commit2}, then at least $n-f-1-(f-1)=n-2f\geq f+1$ honest non-broadcasters send \votetwo for $v$. We only need to show that no honest party send \votetwo for $v'\neq v$, then all honest parties will send \votetwo for $v$ and thus commit $v$. Suppose there is an honest party that send \votetwo for $v'\neq v$, then there exists two sets of \voteone messages from $n-f-1$ distinct non-broadcasters for $v$ and $v'$ respectively. Suppose there are $t>0$ Byzantine parties, then at least $n-f-1-(t-1)\geq n-t-f$ honest parties send \voteone for $v$ and $v'$ respectively, which is impossible as there are $n-t< 2(n-t-f)$ honest parties. Therefore no honest party sends \votetwo for $v'\neq v$, and all honest parties commits $v$.
    
    It is clear from the protocol that after at most 2 rounds (\voteone and \votetwo) since any honest party commits, all honest parties also commit. Hence the \badlatency is 4 rounds.
\end{proof}

\subsection{3-round Lower Bound for Unauthenticated BRB under $n\leq 4f-1$}
\label{sec:async:lowerbound}

\begin{figure*}[tb]
    \centering
    \includegraphics[width=0.9\textwidth]{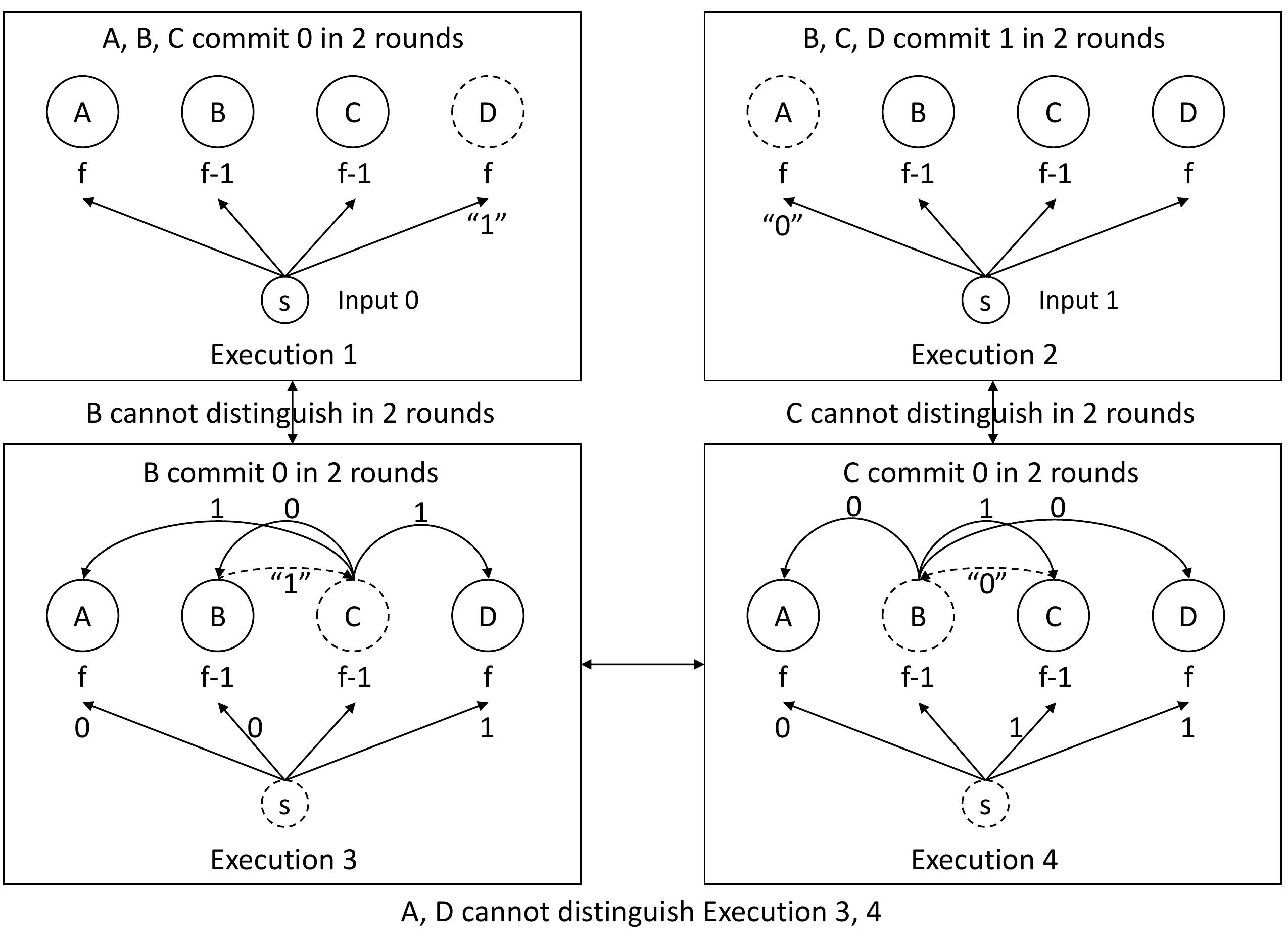}
    \caption{Unauthenticated BRB Good-case Latency Lower Bound: $3$ rounds under $n=4f-1$. 
    Dotted circles denote Byzantine parties. 
    }
    \label{fig:async:lb}
\end{figure*}

\begin{theorem}\label{thm:lowerbound}
    Any unauthenticated Byzantine reliable broadcast protocol under $3f+1\leq n \leq 4f-1$ must have a \latency of at least $3$ rounds even under synchrony.
\end{theorem}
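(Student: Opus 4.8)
The plan is to argue by contradiction using a standard indistinguishability argument, showing that a \latency of $2$ rounds is incompatible with agreement and validity when $n \le 4f-1$. Assume a deterministic unauthenticated BRB protocol $\Pi$ in which every honest party commits by the end of round $2$ whenever the broadcaster is honest. The first observation I would record is that in such a protocol an honest party $p$'s round-$2$ commit decision is a function only of the broadcaster's round-$1$ message to $p$ together with the round-$2$ messages $p$ receives from the other parties; and since only the broadcaster holds an input, each honest party's round-$2$ message is itself determined solely by the round-$1$ message it received. This locality is exactly what makes a $2$-round commit vulnerable. I would also note that it suffices to treat $n = 4f-1$ (smaller $n$ follows by having extra parties crash, or by the same partition with smaller groups), and that I will only ever delay messages by at most $\delta$, so every execution I build is a legal synchronous execution and the bound therefore holds even under synchrony.

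Next I would exploit $n \le 4f-1 < 4f$ to partition all parties into four groups, each of size at most $f$, so that the adversary is able to corrupt any single group. I would then set up two \emph{anchor} executions: one with an honest broadcaster on input $0$, in which validity together with the $2$-round guarantee forces every honest party to commit $0$, and a symmetric anchor on input $1$. These anchors pin down commit values that I can then transport into adversarial executions by indistinguishability.

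The core of the proof is to build a \emph{conflict} execution — or, if a single world cannot bridge the two views within the fault budget, a short \emph{chain} of executions — in which the broadcaster is Byzantine and equivocates (sending $0$ to some groups and $1$ to others) while one group at a time is corrupted to emulate the round-$2$ messages honest parties would send in the adjacent world. I would arrange that, to one honest group, the conflict execution is indistinguishable from the $0$-anchor (so that group commits $0$), while to another honest group it is indistinguishable from the $1$-anchor (so that group commits $1$); having two honest parties commit different values in a single execution contradicts agreement. When a single equivocation does not suffice, I would interpolate a sequence of executions in which adjacent ones are indistinguishable to some fixed honest group and differ only in the behavior of one corrupted group together with the broadcaster's equivocation, propagating the committed value from $0$ at one end of the chain to $1$ at the other.

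The step I expect to be the main obstacle is the fault-budget bookkeeping: every execution in the chain must have at most $f$ Byzantine parties — either a whole group, or a group together with the equivocating broadcaster — while simultaneously providing enough ``slack'' groups for the corrupted parties to emulate the missing honest messages \emph{exactly}, so that the bridging honest group genuinely cannot distinguish adjacent worlds. The entire argument has to be calibrated so that this counting barely closes when $n \le 4f-1$ (each corrupted set staying within $f$, and the four groups covering all $n$ parties) and provably fails at $n = 4f$, matching the upper bound of Theorem~\ref{thm:ub:async}.
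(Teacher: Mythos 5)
Your overall strategy --- four groups of size at most $f$, two anchor executions pinned down by validity plus the $2$-round guarantee, and a chain of Byzantine-broadcaster executions bridged by indistinguishability --- is exactly the paper's route (it uses $|A|=|D|=f$, $|B|=|C|=f-1$ around the broadcaster $s$, with the chain Exec~1 $\sim_B$ Exec~3 $\sim_{A,D}$ Exec~4 $\sim_C$ Exec~2). However, your primary plan for the contradiction does not work, and the fallback you hedge with is underspecified at precisely the point where it matters. A single ``conflict'' execution in which one honest group commits $0$ and another honest group commits $1$ is unachievable here: the fault budget is the broadcaster plus one group of size $f-1$, the broadcaster can send only one value to each honest party, and every honest relay's round-$2$ message is consistent with that single value --- so at most one of the two target groups can be shown a view matching its anchor. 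You are forced into the two-execution version: one execution where $C$ is corrupt and honest $B$ commits $0$, a symmetric one where $B$ is corrupt and honest $C$ commits $1$, and a contradiction extracted from the honest bystanders $A\cup D$ who appear in both.

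That is where the real gap is. The contradiction for $A\cup D$ is not about their round-$2$ behavior (they need not commit by round $2$ at all, since the broadcaster is Byzantine); it is that in the first execution agreement and termination force them to \emph{eventually} commit $0$ (because $B$ committed $0$), in the second to eventually commit $1$, so the two executions must be indistinguishable to $A\cup D$ \emph{for all rounds}, not just the first two. Your locality observation only delivers indistinguishability through round $2$. To push it to all rounds, the corrupted group must be scripted carefully after round $2$: in the paper, the Byzantine $C$ in Execution~3 sends messages to $B$ only in the first two rounds and pretends to $A,D$ that it received only two rounds of messages from $B$, so that the honest $B$ of Execution~3 and the Byzantine $B$ of Execution~4 present identical transcripts to $A\cup D$ forever. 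Without this post-round-$2$ bookkeeping the two middle executions diverge at round $3$ (honest $B$ in Execution~3 reacts to $C$'s round-$2$ messages, which have no counterpart in Execution~4), and the chain does not close. Your fault-budget accounting is correct, but it is this ``indistinguishable forever'' requirement, not the counting, that is the step your sketch leaves unresolved.
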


\begin{proof}[Proof of Theorem~\ref{thm:lowerbound}.]
    The proof is illustrated in Figure~\ref{fig:async:lb}.
    We assume all parties start their protocol at the same time, which strengthens the lower bound result. Under synchrony, any message between all honest parties will be delivered within $\delta$ time, and hence each round of the protocol is of $\delta$ time. 
    Without loss of generality, we prove the lower bound for $n=4f-1$.
    Suppose there exists a BRB protocol $\Pi$ that has a \latency of $2$ round, which means the honest parties can always commit after receiving two rounds of messages but before receiving any message from the third round, if the designated broadcaster is honest.
    Let party $s$ be the broadcaster, and divide the remaining $n-1=4f-2$ parties into $4$ groups $A,B,C,D$ where $|A|=|D|=f$ and $|B|=|C|=f-1$. 
    For brevity, we often use $A$ ($B,C,D$) to refer all the parties in $A$ ($B,C,D$).
    Consider the following three executions of $\Pi$. 
\begin{itemize}
    \item Execution  1. 
    The broadcaster $s$ is honest and has input $0$. Parties in $D$ are Byzantine, they behave honestly according to the protocol except that they pretend to receive from a broadcaster whose input is $1$. 
    Since the broadcaster is honest, by validity and \latency, parties in $A,B,C$ will commit $0$ after receiving two rounds of messages but before receiving any message from the third round.
    
    \item Execution  2. This execution is a symmetric case of Execution 1.
    The broadcaster $s$ is honest and has input $1$. Parties in $A$ are Byzantine, they behave honestly according to the protocol except that they pretend to receive from a broadcaster whose input is $0$. 
    Since the broadcaster is honest, by validity and \latency, parties in $B,C,D$ will commit $1$ after receiving two rounds of messages but before receiving any message from the third round.
    
    
    \item Execution  3. 
    The broadcaster $s$ and the parties in $C$ are Byzantine. $s$ behaves to $A,B$ identically as in Execution 1 and to $D$ identically as in Execution 2. 
    Parties in $C$ behave to $B$ honestly according to the protocol except that they pretend to receive the same messages from the broadcaster as in Execution 1, and only send messages to $B$ in the first two rounds.
    Parties in $C$ behave to $A,D$ honestly except that they pretend to receive the same messages from the broadcaster as in Execution 2, and pretend to receive messages from $B$ as in Execution 2 only in the first two rounds.
    
    \item Execution  4. This execution is a symmetric case of Execution 3.
    The broadcaster $s$ and the parties in $B$ are Byzantine. $s$ behaves to $A$ identically as in Execution 1 and to $C,D$ identically as in Execution 2. 
    Parties in $B$ behave to $C$ honestly according to the protocol except that they pretend to receive the same messages from the broadcaster as in Execution 2, and only send messages to $C$ in the first two rounds.
    Parties in $B$ behave to $A,D$ honestly except that they pretend to receive the same messages from the broadcaster as in Execution 1, and pretend to receive messages from $C$ as in Execution 1 only in the first two rounds.
    
\end{itemize}
We show the following indistinguishability and contradiction.
\begin{itemize}
    \item $B$ cannot distinguish Execution 1 and 3 in the first two rounds, and thus will commit $0$ in the end of round 2 in Execution 3. The broadcaster $s$ behaves to $B$ identically in both executions. The messages sent to $B$ in the first round by any non-broadcaster party are identical in Execution 1 and 3, since the first round message only depends on the initial state and all Byzantine parties behave honestly in the first round.  For the second round, in Execution 3, since parties in $D$ pretends to $B$ that it receives messages from the broadcaster with input $1$, and parties in $C$ pretends to $B$ that it receives the same messages from the broadcaster as in Execution 1, the parties in $B$ observe the same messages in the first two rounds of both executions. Hence, $B$ cannot distinguish Execution 1 and 3 in the first two rounds. Since $B$ commit $0$ in the end of round 2 in Execution 1 due to validity and good-case latency, $B$ also commit $0$ in the end of round 2 in Execution 3.
    
    \item Similarly, $C$ cannot distinguish Execution 2 and 4 in the first two rounds, and thus will commit $1$ in the end of round 2 in Execution 4.
    
    \item $A,D$ cannot distinguish Execution 3 and 4. Similarly, the messages sent to $A,D$ in the first round are identical in both executions. 
    The broadcaster $s$ behaves to $A,B$ identically in Execution 3 and 4 as in Execution 1, and to $C,D$ identically in Execution 3 and 4 as in Execution 2. 
    In Execution 3, parties in $B$ only receive messages from $C$ in the first two rounds, and Byzantine parties in $C$ pretend to receive messages from a broadcaster whose input is $1$. In Execution 4, Byzantine parties in $B$ pretends only receiving two rounds of messages from $C$. Since the first two rounds of messages only depend on the initial state and the message received from the broadcaster in the first round, parties in $B$ receives the same messages from $C$. Therefore, $A,D$ receive the same messages from $B$ in both Execution 3 and 4.
    Similarly, $A,D$ receive the same messages from $C$ in both Execution 3 and 4, and thus cannot distinguish these two executions.
\end{itemize}
    
    {\em Contradiction.} 
    Since parties in $B$ commit $0$ in Execution 3, parties in $C$ commit $1$ in Execution 4, and parties in $A,D$ cannot distinguish Execution 3 and 4, either agreement or termination of BRB will be violated. Therefore no such protocol $\Pi$ exists.
\end{proof}

\section{Bad-case Latency of Asynchronous Byzantine Reliable Broadcast}
\label{sec:async:badcase}


In this section, we present 2 impossibility results and 4 asynchronous BRB protocols with tight trade-offs between resilience, \latency and \badlatency.

Recall that the classic Bracha reliable broadcast~\cite{bracha1987asynchronous} has optimal resilience of $n\geq 3f+1$, non-optimal \latency of 3 rounds and \badlatency of 4 rounds (1 extra round).
The 2-round BRB protocol by Imbs and Raynal~\cite{imbs2016trading} has non-optimal resilience of $n\geq 5f+1$, optimal \latency of 2 rounds and \badlatency of 3 rounds (1 extra round).
Meanwhile, our 2-round BRB protocol from Section~\ref{sec:async:goodcase} has optimal resilience $n\geq 4f$, optimal \latency of 2 rounds and \badlatency of 4 round (2 extra rounds).
All protocols above have optimal communication complexity of $O(n^2)$, matching the lower bound~\cite{dolev1985bounds}.

On the other hand, we can show that for any $f>1$, no asynchronous BRB protocol can achieve both \latency of 2 rounds and \badlatency of 2 rounds (Theorem~\ref{thm:simplelb} in Section~\ref{sec:simplelb}). For the special case of $f=1$, we show it is possible to have a $(2,2)$-round BRB (Theorem~\ref{thm:f=1}).

Therefore, it is interesting to ask:

\emph{
Under what conditions can BRB achieve optimality in all three metrics -- optimal resilience of $n\geq 4f$, optimal \latency of 2 rounds and optimal \badlatency of 3 rounds (1 extra round)?    
}

We show it is impossible for the general case of $f\geq 3$, by proving that no BRB protocol under $n\leq 5f-2,f\geq 3$ can achieve $(2,3)$-round (Theorem~\ref{thm:impossibility}). 
For $f\geq 3$, our BRB (Figure~\ref{fig:brb}) in earlier Section~\ref{sec:async:protocol} has optimal \latency of 2 rounds and optimal resilience $n\geq 4f$, but with \badlatency of 4 rounds.
On the other hand, we give a $(2,3)$-round BRB protocol (Figure~\ref{fig:5f-1}) with tight resilience $n\geq 5f-1$, improving the $n\geq 5f+1$ resilience of Imbs and Raynal~\cite{imbs2016trading}. 
For the special case of $f=2$, we show it is possible to construct a $(2,3)$-round BRB (Figure~\ref{fig:f=2}) with optimal resilience $n\geq 4f$.


\subsection{Impossibility of $(2,2)$-round BRB}
\label{sec:simplelb}
For the general case of $f\geq 2$, we show any asynchronous BRB protocol cannot achieve $(2,2)$-round.

\begin{theorem}\label{thm:simplelb}
    Any asynchronous unauthenticated Byzantine reliable broadcast protocol under $f\geq 2$ and has a \latency of 2 rounds must have a \badlatency of at least 3 rounds.
\end{theorem}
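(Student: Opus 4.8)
The plan is to use an indistinguishability argument in the same style as Theorem~\ref{thm:lowerbound}, but now exploiting the interaction between the good-case commit path (2 rounds) and the termination requirement of BRB. The key tension is this: if a protocol commits in 2 rounds whenever the broadcaster is honest, then the decision of a party after 2 rounds depends only on the first two rounds of messages it sees. A Byzantine broadcaster together with Byzantine parties can arrange for a single honest party (or a small set) to commit some value $v$ in round~2 on a ``thin'' set of supporting messages, while arranging the view of the remaining honest parties so that, after only one additional round, they cannot yet safely commit $v$ — otherwise we could construct a twin execution in which they would be forced to commit a conflicting value $v'$, violating agreement.

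Concretely, I would let $s$ be the broadcaster and partition the non-broadcaster honest parties so that one designated honest party $p$ sees exactly the 2-round good-case evidence for $v$ and commits $v$ at the end of round~2. Suppose toward contradiction that the \badlatency is 2, i.e. $R_{ex}=0$, so that \emph{all} honest parties must commit by the end of round~2 once $p$ commits. I would then build two executions that are indistinguishable to a block of honest parties through round~2: in one, a Byzantine broadcaster and $\le f$ Byzantine helpers feed these parties exactly the round-1 and round-2 messages consistent with value $v$; in a symmetric twin execution they are fed messages consistent with $v'\neq v$. Because $f\ge 2$, there are enough Byzantine parties to (i) supply the broadcaster role plus (ii) simulate the missing honest support in each execution so that the targeted block's two-round view is identical across the two executions. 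Forcing those parties to commit at the end of round~2 then makes them commit the same value in both executions, while the constructions require them to commit $v$ in one and $v'$ in the other — a contradiction with agreement.

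The step I expect to be the main obstacle is the careful bookkeeping of \emph{which} parties are Byzantine in each execution and showing that $f\ge 2$ is exactly what is needed to make the two-round views coincide while still leaving an honest party $p$ that genuinely commits in the good case. In particular I must ensure that the execution in which $p$ commits $v$ is consistent with \latency being 2 (so the good-case premise is actually triggered: the broadcaster behaves honestly toward $p$'s supporting set), and simultaneously that the twin execution reuses the budget of $f$ faults to flip the broadcaster's apparent input for the targeted block. The reason $f=1$ escapes this argument — and must escape it, given Theorem~\ref{thm:f=1} — is that with a single fault one cannot simultaneously corrupt the broadcaster and supply the extra simulated support needed to make the two round-2 views agree; this asymmetry is precisely what the $f\ge 2$ hypothesis buys, and I would highlight it to make the boundary of the impossibility transparent.
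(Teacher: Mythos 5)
Your high-level framing is right (indistinguishability, and the observation that $R_{ex}=0$ forces every honest party's round-2 decision to be determined by its two-round view), but the construction you sketch has a genuine gap at its central step. You want a single pair of twin executions in which some honest party $p$ commits $v$ in two rounds, a targeted block has identical two-round views in both executions, and in the twin some party commits $v'$. The problem is that these two requirements pull in opposite directions. To establish that $p$ actually commits in two rounds you invoke the good-case guarantee, which only applies when the broadcaster is \emph{honest toward everyone}; if the Byzantine broadcaster reveals ``input $v$'' to a large set so that $p$'s view matches a good-case execution, then the targeted block also sees overwhelming evidence for $v$, and its view cannot be made to coincide with a twin execution built around $v'$. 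Conversely, if the broadcaster reveals $v$ only to a thin set, there is no longer any guarantee that $p$ commits at all, and a single indistinguishability step cannot supply one: with only one Byzantine helper you can impersonate the missing messages toward at most one party, and delaying more parties' messages changes $p$'s view away from the good case.

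The paper resolves this with a hybrid chain of $\Theta(n)$ executions that your proposal does not contain and that I do not see how to avoid. Starting from a genuinely good-case execution with input $0$, each step shrinks by one the set of parties to which the (now Byzantine) broadcaster behaves honestly: one Byzantine helper impersonates the just-removed party toward a single ``pivot'' party, and asynchrony delays one other party's messages, so the pivot's two-round view is unchanged from the previous execution and it still commits $0$; then the $\badlatency$-of-$2$ assumption is invoked \emph{at every step} to propagate that round-2 commitment to all remaining honest parties, which is what licenses the next step of the chain. Iterating down to an execution where the broadcaster speaks honestly only to party $1$ (and symmetrically, a second chain for input $1$ ending at party $n-1$), one merges the two endpoints into a single execution in which party $1$ commits $0$ and party $n-1$ commits $1$, violating agreement outright. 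Your proposal uses the bad-case assumption only once, at the end, and so never gets an honest party to commit on ``thin'' evidence in the first place; to repair it you would need to add this chaining argument, at which point you would essentially have reproduced the paper's proof. Your intuition for why $f=1$ escapes (one cannot both corrupt the broadcaster and field an impersonator) is correct and matches the role the two faults play in each link of the chain.
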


\begin{proof}
    Suppose on the contrary there exists an asynchronous BRB protocol $\Pi$ that tolerates $f= 2$ and has $(2,2)$-round. We assume $n\geq 4f=8$, otherwise no protocol can solve BRB with \latency of 2 rounds by Theorem~\ref{thm:lowerbound}. Denote the broadcaster as party $0$ always, and remaining parties as party $1,...,n-1$.
    We construct the following executions.
    \begin{itemize}
        \item Execution 1. The broadcaster is honest, and has input $0$. Party $n-1$ is Byzantine and remain silent. Then all honest parties commit 0 in 2 rounds by assumption.
        \item Execution 2. 
        The broadcaster is Byzantine, and behaves honestly to parties $1,...,n-3$ with input $0$, and remains silent to other parties. Party $n-2$ is Byzantine, and behaves identically to party $n-3$ as in Execution 1, but remains silent to rest of the parties. Any messages from party $n-1$ are delayed and not delivered in 2 rounds.
        It is easy to see that party $n-3$ cannot distinguish Execution 2 and 1 in 2 rounds, therefore it commits 0 in 2 rounds in Execution 2 as well. By assumption, parties $1,...,n-4$ also commit 0 in 2 rounds in Execution 2.
        \item Execution $x$ for $x=3,...,n-3$. 
        The broadcaster is Byzantine, and behaves honestly to parties $1,...,n-x-1$ with input $0$, and remains silent to other parties. Party $n-x$ is Byzantine, and behaves identically to party $n-x-1$ as in Execution $x-1$, but remains silent to rest of the parties. Any messages from party $n-x+1$ are delayed and not delivered in 2 rounds.
        It is easy to see that party $n-x-1$ cannot distinguish Execution $x$ and $x-1$ in 2 rounds, therefore it commits 0 in 2 rounds in Execution $x$ as well. By assumption, parties $1,...,n-x-2$ also commit 0 in 2 rounds in Execution 2.
        \item Execution $n-2$. 
        The broadcaster is Byzantine, and behaves honestly to party $1$ with input $0$, and remains silent to other parties. Party $2$ is Byzantine, and behaves identically to party $1$ as in Execution $n-3$, but remains silent to rest of the parties. Any messages from party $3$ are delayed and not delivered in 2 rounds.
        It is easy to see that party $1$ cannot distinguish Execution $n-2$ and $n-3$ in 2 rounds, therefore it commits 0 in 2 rounds in Execution $n-2$ as well.
    \end{itemize}
    Similarly, we can construct $n-2$ symmetric executions, where the broadcaster has input 1, and in the last execution the broadcaster only behaves honestly to party $n-1$ with input 1, and party $n-1$ commits 1 in 2 rounds. 
    
    {\em Contradiction.} Now we consider another execution, where the broadcaster is Byzantine, it behaves to party 1 honestly with input 0, and to party $n-1$ honestly with input 1, and remain silent to other parties. Party 2 is Byzantine, it behaves to party 1 identically as in Execution $n-3$, and to party $n-1$ identically as the party $n-2$ to party $n-1$ in the last execution of the constructed symmetric executions (due to symmetric of the non-broadcaster parties, the index does not matter). Any messages between parties $1,n-1$ are delayed and not delivered in 2 rounds. Then, party 1 commits 0 in 2 rounds while party $n-1$ commit 1 in 2 rounds, breaking agreement of the BRB.
    Therefore, such protocol $\Pi$ does not exist.
\end{proof}

\subsection{$(2,2)$-round BRB Protocol under $n\geq 4f, f=1$}
For the special case of $f=1$, we can show a simple BRB protocol (Figure~\ref{fig:f=1}) that has optimal \latency and \badlatency of 2 rounds, while having optimal resilience $n\geq 4$.

\begin{figure}[tb]
    \centering
    \begin{mybox}
\begin{enumerate}
    \item\label{rb4:step:propose} \textbf{Propose.} The designated broadcaster $L$  with input $v$ sends $\langle \texttt{propose}, v\rangle$ to all  parties.

    \item\label{rb4:step:ack} \textbf{Ack.} 
    When receiving the first proposal $\langle \texttt{propose}, v\rangle$ from the broadcaster,
    a party sends an \ack message for $v$ to all parties in the form of $\langle \ack, v \rangle$.

    \item\label{rb4:step:commit1} \textbf{2-round Commit.}
    When receiving $\langle \ack, v \rangle$ from $n-2$ distinct non-broadcaster parties, a party commits $v$ and terminates.

\end{enumerate}
    \end{mybox}
    \caption{$(2,2)$-round BRB Protocol under $n\geq 4f, f=1$}
    \label{fig:f=1}
\end{figure}

\begin{theorem}\label{thm:f=1}
    The protocol in Figure~\ref{fig:f=1} solves Byzantine reliable broadcast under asynchrony with optimal resilience $n\geq 4, f=1$, optimal \latency and \badlatency of 2 rounds.
\end{theorem}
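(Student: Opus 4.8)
The plan is to verify the three BRB properties together with the two latency claims, exploiting heavily the fact that with $f=1$ there is at most one faulty party in the whole system. The decisive observation is a dichotomy: either the broadcaster is honest, in which case Validity does all the work, or the broadcaster is the unique Byzantine party, in which case \emph{every} non-broadcaster is honest. This collapses the hard equivocation case into one where only a single (the broadcasting) party misbehaves, which is precisely what makes $(2,2)$ achievable for $f=1$.

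For Validity and \latency I would argue as follows: if $L$ is honest it sends $\langle\propose,v\rangle$ to everyone, so all $n-1$ non-broadcasters except at most $f=1$ Byzantine one---i.e.\ at least $n-2$ of them---send $\langle\ack,v\rangle$ to all. Hence every honest party collects $n-2$ acks for $v$ by the end of round~2 and commits $v$, giving latency exactly $2$ rounds. I also need the converse safety check that no honest party commits $v'\neq v$ under an honest broadcaster: committing $v'$ would require $\langle\ack,v'\rangle$ from $n-2\geq 2$ distinct non-broadcaster parties, but only the single Byzantine party can ack $v'$, so this is impossible.

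For Agreement, the honest-broadcaster case is immediate from Validity. In the Byzantine-broadcaster case all $n-1$ non-broadcasters are honest, and an honest party sends $\ack$ only for the first proposal it sees, hence for at most one value. If two honest parties committed distinct $v\neq v'$, the two witnessing ack-sets would be disjoint sets of honest non-broadcasters, requiring $2(n-2)$ distinct parties among only $n-1$ available; since $n\geq 4$ gives $2(n-2)>n-1$, this is a contradiction, so the committed values must coincide.

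For Termination and \badlatency the honest-broadcaster case again follows from Validity, so I focus on a Byzantine broadcaster, where all non-broadcasters are honest. If some honest party commits $v$, it witnessed $\langle\ack,v\rangle$ from $n-2$ distinct honest non-broadcasters; because honest parties broadcast their acks to everyone, every honest party eventually receives these same $n-2$ acks and commits $v$. The subtle point---and the step I expect to need the most care---is arguing that this costs no extra protocol round, i.e.\ $R_{ex}=0$: unlike Bracha's protocol there is no separate vote phase, so the very acks that let the first party commit are already in transit to all parties, and no honest party must send a new message to enable the others to commit. This yields \badlatency equal to the \latency of $2$ rounds. Finally, optimality of resilience and of both latencies follows from the $n\geq 4f$ lower bound of Theorem~\ref{thm:lowerbound} (which forces $n\geq 4$ for any $2$-round protocol), while $2$ rounds is trivially a lower bound for the bad case since it cannot beat the good case.
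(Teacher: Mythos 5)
Your proof is correct and follows essentially the same route as the paper's: the same dichotomy (honest broadcaster vs.\ the broadcaster being the unique Byzantine party, so that all non-broadcasters are honest), the same quorum argument for agreement, and the same observation that the $n-2$ acks witnessing the first commit are already in transit to everyone, so no extra round is needed. One tiny caveat: for $f=1$ the range $3f+1\leq n\leq 4f-1$ in Theorem~\ref{thm:lowerbound} is empty, so the optimality of $n\geq 4$ does not come from that theorem but simply from the $n\geq 3f+1$ solvability bound for BRB --- the conclusion is unaffected.
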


\begin{proof}
    {\bf Validity and Good-case Latency.}
    
    If the broadcaster is honest, it sends the same proposal of value $v$ to all parties, and all $n-2\geq 2$ non-broadcaster honest parties will multicast the \ack message for $v$. Since there is just one Byzantine party, its \ack is below the $n-2$ threshold. Then all honest parties will commit $v$ after receiving $n-2$ \ack messages at Step~\ref{rb4:step:commit1} and terminate. The \latency is $2$ rounds, including broadcaster sending the proposal and all parties sending \ack message.
    
    {\bf Agreement, Termination and Bad-case Latency.}
    
    If the broadcaster is honest, by validity all honest parties will commit the same value. 
    If the broadcaster is Byzantine, then all $n-1$ non-broadcaster parties are honest. If an honest party commits $v$ at Step~\ref{rb4:step:commit1}, then it receives $n-2$ \ack messages of $v$ from distinct non-broadcaster parties, and thus all honest parties will also receive these \ack messages and commit $v$.
    Since all honest parties commit in the same asynchronous round, the \badlatency is also 2 rounds.
\end{proof}

\subsection{Impossibility of $(2,3)$-round BRB}

\begin{theorem}\label{thm:impossibility}
    Any asynchronous unauthenticated Byzantine reliable broadcast protocol under $n\leq 5f-2, f\geq 3$ and has a \latency of 2 rounds must have a \badlatency of at least 4 rounds.
\end{theorem}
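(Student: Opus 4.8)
The plan is to argue by contradiction using the standard indistinguishability technique, assuming a deterministic $(2,3)$-round BRB protocol $\Pi$ exists with $n \le 5f-2$ and $f \ge 3$. It suffices to treat the hardest case $n = 5f-2$; smaller $n$ only give the adversary more room and are handled analogously. First I would isolate the generic shape forced on any \latency-$2$, \badlatency-$3$ protocol: a party may \emph{fast-commit} a value $v$ at the end of round $2$ from the round-$2$ \ack messages alone, and by Termination (Definition~\ref{def:brb}) every honest party must \emph{slow-commit} by the end of round $3$, i.e. using only what it can have received by round $3$ --- the round-$1$ proposal, the round-$2$ \ack's, and the round-$3$ messages that honest parties deterministically derive from the \ack's they saw in round $2$. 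I will not assume $\Pi$ uses explicit thresholds; I only use that each honest party's action is a deterministic function of its view through round $3$.

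Next I would partition the $n-1 = 5f-3$ non-broadcaster parties into a victim set and several blocks around a broadcaster $s$, and build three families of executions. In a ``$0$-execution'' $s$ is Byzantine but feeds value $0$ to one honest block so that, together with the $f-1$ non-broadcaster Byzantine parties supplying \ack's, some honest party fast-commits $0$ in round $2$; by \badlatency $3$ every honest party --- in particular the victim $g$ --- must slow-commit $0$ by the end of round $3$. A symmetric ``$1$-execution'' forces a second honest party $g'$ to slow-commit $1$ by round $3$. The crux is a merged execution in which $s$ is Byzantine and presents $0$ to one side and $1$ to the other, the $f-1$ remaining Byzantine parties act as firewalls, and the scheduler delays the genuinely conflicting honest messages past round $3$ (exactly as honest messages are delayed past the commit round in the proofs of Theorems~\ref{thm:simplelb} and~\ref{thm:lowerbound}). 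I would place the firewalls and delays so that $g$ has through round $3$ precisely its $0$-execution view while $g'$ has through round $3$ precisely its $1$-execution view; then $g$ slow-commits $0$ and $g'$ slow-commits $1$ in the same execution with both honest, violating Agreement.

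The reason the merge is feasible exactly when $n \le 5f-2$ is a collision between two requirements on the round-$3$ (``vote'') behaviour, which the indistinguishable executions realize concretely. Termination makes it \emph{permissive}: a fast-commit quorum with $\le f-1$ Byzantine \ack's leaves as few as $n-2f$ honest \ack's behind, and all honest parties must still be driven to vote for and then slow-commit $v$, so a party must be willing to vote for $v$ on about $n-2f$ \ack's for $v$. Agreement makes it \emph{selective}: to stop a single honest party from being padded by Byzantine \ack's into voting for both $0$ and $1$ (which would let the adversary manufacture conflicting slow-commits), voting for $v$ must require more than $(n+f-2)/2$ \ack's for $v$. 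These are compatible only if $n-2f > (n+f-2)/2$, i.e. $n > 5f-2$; for $n \le 5f-2$ they contradict, and the merged execution is precisely the adversarial witness of this gap. The hypothesis $f \ge 3$ enters because the merge needs the broadcaster plus at least two disjoint non-broadcaster firewall blocks to be Byzantine, i.e. $f-1 \ge 2$; for $f=2$ the gap closes and indeed F2-BRB (Theorem~\ref{thm:f=2}) attains $(2,3)$-round.

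The hard part will be controlling round $3$. Unlike the \latency lower bound of Theorem~\ref{thm:lowerbound}, where victims commit in round $2$ and only rounds $1$--$2$ must be matched, here the victim commits in round $3$, so I must also make its incoming round-$3$ votes identical across the merged and single-value executions. Since an honest party's round-$3$ message is a deterministic function of its rounds $1$--$2$ view, matching $g$'s round-$3$ view cascades into matching the rounds $1$--$2$ view of every honest party that votes to $g$, and so on down the chain. The firewall blocks (Byzantine in the merged execution) must be positioned to sever this cascade, and the scheduler must delay the remaining honest cross-messages beyond round $3$ without making the two sides distinguishable to $g$ or $g'$. Verifying that the broadcaster plus $f-1$ Byzantine parties, together with the delay schedule, suffice to cut every such dependency while keeping total corruption at $f$ in each execution is the delicate bookkeeping the proof must carry out, and is exactly where the $n \le 5f-2$ budget is spent.
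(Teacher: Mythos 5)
Your overall strategy --- contradiction via indistinguishability at $n=5f-2$, a family of executions forcing a fast commit of $0$ and hence (by the assumed \badlatency of $3$) a round-$3$ commit of $0$ by a victim, a symmetric family forcing a second victim to commit $1$, and a merged execution violating agreement --- is exactly the paper's, and you correctly identify the central difficulty: matching a victim's round-$3$ view cascades into matching the round-$2$ views of every honest sender. Your threshold-collision calculation ($n-2f$ versus $(n+f-2)/2$, colliding precisely when $n\leq 5f-2$) is a good sanity check against Theorem~\ref{thm:5f-1}, but, as you concede, it is not a proof step for an arbitrary deterministic protocol, so the entire burden falls on the execution construction.

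That construction is where the proposal has a genuine gap. The three-execution structure you sketch (a $0$-execution, a $1$-execution, and a merge secured by firewalls plus scheduler delays) cannot be made to work within the corruption budget: in the merge, the honest parties on the ``$1$ side'' have genuinely different round-$1$ and round-$2$ views, so their round-$3$ messages to the victim $p$ differ from those in the reference $0$-execution; you cannot corrupt all of them (that side has $2f-1$ parties), and you cannot simply ``delay their messages past round $3$'' --- under Definition~\ref{def:badcase} asynchronous rounds are normalized by the maximum honest-to-honest delay, so an honest-to-honest message by definition arrives within one round, and every such delay must instead be licensed by an indistinguishable companion execution in which the sender is Byzantine and silent. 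The paper supplies the missing mechanism with a chain of roughly $4f+1$ hybrid executions per side: it partitions the non-broadcaster parties as $\{p\}\cup G_1\cup\dots\cup G_5\cup\{q\}$ with $|G_i|=f-1$, uses $G_3$ as the single firewall, and alternates, one party $G_R[j]$ at a time, between an execution in which $G_R[j]$ is Byzantine and pretends to $p$ to have heard nothing from $G_3$ (corrupting the broadcaster, $G_3\setminus\{G_3[1]\}$, and $G_R[j]$ --- exactly $f$ parties, which is where $f\geq 3$ enters via $|G_3|-1=f-2\geq 1$) and one in which $G_R[j]$ is honest but the Byzantine $G_3$ genuinely sends it nothing after round $1$. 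Only after the full chain does one reach an execution in which all of $G_R$ is honest, hears nothing from $G_3$, and yet $p$ must still commit $0$ in round $3$; the merge is then taken between the two fully-chained endpoints. Without this (or an equivalent) inductive mechanism your merge step is not justified; it is the heart of the proof rather than bookkeeping that can be deferred.
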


\begin{proof}

    Suppose on the contrary that there exists an asynchronous BRB protocol $\Pi$ under $n=5f-2, f\geq 3$ that has $(2,3)$-round. Denote the broadcaster as party 0 always, denote 2 non-broadcaster parties as $p,q$, and divide the remaining $5f-5$ parties into 5 groups $G_1,...,G_5$ each of size $f-1$ (recall $f-1\geq 2$). 
    Denote $G_L = \{p\}\cup G_1\cup G_2$ and $G_R=G_4\cup G_5\cup \{q\}$.
    We use $S[i]$ to denote the $i$-th party in set $S$, where $S$ can be any set defined above (such as $G_{j}$ for $j=1,...,5$ and $G_L, G_R$).
    We construct the following executions.
    In all constructed executions, all messages are delivered by the recipient after $\Delta$ time by default, and we will explicitly specify the messages that are delayed by the adversary due to asynchrony.
    
    \begin{itemize}
        \item $E_1^0$.
        The broadcaster is honest and has input 0. Parties in $G_5\cup \{q\}$ are Byzantine, and they behave honestly except that they pretend to receive from a broadcaster whose input is 1. 
        Since the broadcaster is honest, by validity and \latency, all honest parties commit 0 after receiving two rounds of messages.
        
        \item $E_1^1$. This execution is a symmetric case of $E_1^0$.
        The broadcaster is honest and has input 1. Parties in $G_1\cup \{p\}$ are Byzantine, and they behave honestly except that they pretend to receive from a broadcaster whose input is 0. 
        Since the broadcaster is honest, by validity and \latency, all honest parties commit 1 after receiving two rounds of messages.
        
        \item $E_2^0$. 
        The broadcaster is Byzantine, it behaves to $G_L\cup G_3$ identically as in $E_1^0$, and to $G_5\cup \{q\}$ identically as in $E_1^1$. Parties in $G_4$ are Byzantine, they behave to the party $G_3[f-1]$ honestly but pretending to receive from the broadcaster in $E_1^0$, and to other parties honestly but pretending to receive from the broadcaster in $E_1^1$.
        
        \textbf{Claim:} 
        The honest party $G_3[f-1]$ cannot distinguish $E_2^0$ and $E_1^0$ in 2 rounds, and thus will commit 0 in round 2. 
        Then, by assumption, all honest parties also commit 0 in round 3 in $E_2^0$.
        The broadcaster behaves to $G_3[f-1]$ identically in both executions. The messages sent to $G_3[f-1]$ in the first round by any non-broadcaster party are identical in $E_2^0$ and $E_1^0$, since the first round message only depends on the initial state and all Byzantine parties behave honestly in the first round. For the second round, since in $E_1^0$ the Byzantine parties in $G_5\cup \{q\}$ pretend to receive from a broadcaster with input 1, they send the same round-2 messages as in $E_2^0$. For the Byzantine parties in $G_4$, they behave identically to $G_3[f-1]$ by construction. All honest parties in $G_L$ also behave identically to $G_3[f-1]$ in round 2 since they receive the same round-1 messages.
        Therefore party $G_3[f-1]$ cannot distinguish $E_2^0$ and $E_1^0$ in 2 rounds, and thus will commit 0 in round 2.
        
        \item $E_3^0$.
        The broadcaster is Byzantine, it behaves to $G_L$ identically as in $E_1^0$, and to $G_R$ identically as in $E_1^1$. Parties in $G_3$ are Byzantine, they behave to other parties identically as in $E_2^0$.
        
        \textbf{Claim:}
        The honest parties in $G_L\cup G_R$ cannot distinguish $E_3^0$ and $E_2^0$ in 3 rounds, and thus will commit 0 in round 3. For the round-1 message, honest parties receive the same messages in both executions since Byzantine parties including the broadcaster send the same messages. For the round-2 message, the Byzantine parties of $G_4$ in $E_2^0$ behave to $G_L\cup G_R$ as if they receive from a broadcaster with input 1, which would be identically to $E_3^0$. The Byzantine parties of $G_3$ in $E_3^0$ behave to other parties identically as in $E_2^0$ by construction. Similarly, $G_l\cup G_R$ receive the same round-3 messages in both executions, and thus cannot distinguish $E_3^0$ and $E_2^0$ in 3 rounds, and will commit 0 in round 3 in $E_3^0$ as well.
        
        \item $E_{2j+2}^0$ for $j=1,2,...,|G_R|=2f-1$. 
        The broadcaster is Byzantine, it behaves to $G_L\cup \{G_3[1]\}$ identically as in $E_1^0$, and to $G_R$ identically as in $E_1^1$. Parties in $G_3\setminus \{G_3[1]\}$ are Byzantine, and they behave to all honest parties identically as in $E_{2j+1}^0$. Party $G_R[j]$ is Byzantine, and it behaves to all honest parties except $p$ identically as in $E_{2j+1}^0$, and to party $p$ honestly except that it pretends receiving no message from $G_3$ sent after round 1.
        
        \item $E_{2j+3}^0$ for $j=1,2,...,|G_R|=2f-1$.
        The broadcaster is Byzantine, it behaves to $G_L$ identically as in $E_1^0$, and to $G_R$ identically as in $E_1^1$. Parties in $G_3$ are Byzantine, they behave to other parties identically as in $E_{2j+2}^0$, but they send no message to $G_R[j]$ after round 1.

        \textbf{Claim:}
        Any honest party in $G_L\setminus \{p\}$ cannot distinguish $E_{2j+2}^0$ and $E_{2j+1}^0$ in 3 rounds, and it will commit 0 in round 3 in $E_{2j+2}^0$.
        Then, by assumption, party $p$ will also commit 0 in round 3 in $E_{2j+2}^0$. 
        Similar to previous claim, honest parties receive the same round-1 messages. For round 2, the Byzantine parties in $E_{2j+2}^0$ behave identically to all honest parties, including party $p$ since the difference from $G_R[j]$ to $p$ is reflected only after round 2. Hence, honest parties in $G_L\setminus \{p\}$ will also receive the same messages in round 3, thus cannot distinguish $E_{2j+2}^0$ and $E_{2j+1}^0$ in 3 rounds.
        
        \textbf{Claim:}
        Party $p$ cannot distinguish $E_{2j+3}^0$ and $E_{2j+2}^0$ in 3 rounds, and thus will commit 0 in round 3 in $E_{2j+3}^0$. 
        Then, by assumption, all honest parties in $G_L\cup G_R$ also commit 0 in round 3.
        Similar to previous claim, honest parties receive the same round-1 and round-2 messages.
        For round 3, since Byzantine parties in $G_3$ send no message to $G_R[j]$ after round 1 in $E_{2j+3}^0$, the honest party $G_R[j]$ in $E_{2j+3}^0$ will behave the same to $p$ as the Byzantine party $G_R[j]$ which pretends to $p$ that it receives no message from $G_3$ in $E_{2j+2}^0$. 
        Hence, $p$ cannot distinguish $E_{2j+3}^0$ and $E_{2j+2}^0$ in 3 rounds.
        
    \end{itemize}
    By the above constructions, we finally have an execution $E_{2j+3,j=2f-1}^0=E_{4f+1}^0$ where the Byzantine broadcaster behaves to $G_L$ with input 0, and to $G_R$ with input 1, and the Byzantine parties in $G_3$ send no message to $G_R$, but party $p$ has to commit 0 in 3 rounds. 
    Similarly, we can construct a series of symmetric executions of the above executions including $E_1^1$, i.e., $E_1^1,E_2^1,...,E_{4f+1}^1$, and have the execution $E_{4f+1}^1$ where the Byzantine broadcaster also behaves to $G_L$ with input 0, and to $G_R$ with input 1, and the Byzantine parties in $G_3$ send no message to $G_L$, but party $q$ has to commit 1 in 3 rounds.
    
    \emph{Contradiction.}
    Now we construct another middle execution $E_m$, where
    the Byzantine broadcaster behaves to $G_L$ with input 0, and to $G_R$ with input 1, and Byzantine parties in $G_3$ behave to $G_L$ identically as in $E_{4f+1}^0$ and to $G_R$ identically as in $E_{4f+1}^1$.
    It is easy to see that party $p$ cannot distinguish $E_m$ and $E_{4f+1}^0$ in 3 rounds, and thus will commit 0 in round 3, while party $q$ cannot distinguish $E_m$ and $E_{4f+1}^1$ in 3 rounds, and thus will commit 1 in round 3. 
    This violates the agreement property of BRB, and hence such BRB protocol $\Pi$ does not exist.
\end{proof}

\subsection{$(2,3)$-round BRB Protocol under $n\geq 4f, f=2$}
For the special case of $f=2$, we propose a $(2,3)$-round BRB protocol (Figure~\ref{fig:f=2}) that has optimal resilience $n\geq 4f$.
The main idea is that all parties send \ack for broadcaster's proposal, and also send \vote for other parties' \ack. When receiving enough \vote messages of $v$ for the same party, a party locks on $v$. The protocol guarantees that all honest parties lock on the same value for each party when $f=2$. Then, the 3-round commit step let a party commits if the party locks on the same value for a majority of the parties. 
Since all parties send a \vote for all other parties, the message and communication complexity are both $O(n^3)$.

\begin{figure}[tb]
    \centering
    \begin{mybox}
\begin{enumerate}
    \item\label{rb3:step:propose} \textbf{Propose.} The designated broadcaster $L$  with input $v$ sends $\langle \texttt{propose}, v\rangle$ to all  parties.

    \item\label{rb3:step:ack} \textbf{Ack.} 
    When receiving the first proposal $\langle \texttt{propose}, v\rangle$ from the broadcaster,
    a party sends a \ack message for $v$ to all parties in the form of $\langle \ack, v \rangle$.
    
    \item\label{rb3:step:commit1} \textbf{2-round Commit.}
    When receiving $\langle \ack, v \rangle$ from $n-f-1$ distinct non-broadcaster parties, a party commits $v$ and terminates.
    
    \item\label{rb3:step:vote} \textbf{Vote and Lock.} 
    \begin{itemize}
        \item When receiving $\langle \ack, v \rangle$ from a non-broadcaster party $j$, a party sends $\langle \vote, j, v \rangle$ to all parties if not yet sent \vote for party $j$.
        \item When receiving $\langle \vote, j, v \rangle$ from $n-f-2$ distinct {\em non-broadcaster parties other than $j$}, a party locks on $v$ for party $j$.
    \end{itemize}

    \item\label{rb3:step:commit2} \textbf{3-round Commit.}
    When locking on the same $v$ for $n-2f$ distinct non-broadcaster parties, a party commits $v$ and terminates.

\end{enumerate}
    \end{mybox}
    \caption{$(2,3)$-round BRB under $n\geq 4f, f=2$}
    \label{fig:f=2}
\end{figure}

\begin{lemma}\label{lem:f=2}
    If the broadcaster is Byzantine and an honest party locks on $v$ for party $j$, then all honest parties also lock on $v$ for party $j$.
\end{lemma}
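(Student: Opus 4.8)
The plan is to establish two facts and combine them: \emph{uniqueness}, that no two honest parties can lock on different values for the same party $j$, and \emph{propagation}, that the value $v$ locked by the given honest party is voted by enough honest parties that every honest party eventually locks on $v$. Throughout I would use that with $f=2$ and a Byzantine broadcaster there are at most $f-1=1$ Byzantine parties among the $n-1$ non-broadcaster parties, that all $n-2$ honest parties are non-broadcasters, and that the lock threshold in Step~\ref{rb3:step:vote} is $n-f-2=n-4$ votes from distinct non-broadcaster parties other than $j$. I would also invoke two structural facts about $\langle\vote,j,\cdot\rangle$ messages: an honest party emits at most one such message (by the guard in Step~\ref{rb3:step:vote}), and it only does so for a value it actually received in an $\langle\ack,\cdot\rangle$ from $j$, so by channel authenticity a Byzantine party cannot forge a $\vote$ attributed to an honest voter.

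For uniqueness I would split on whether $j$ is honest. If $j$ is honest, then in Step~\ref{rb3:step:ack} it sends a single value $v_j$ to everyone, so every honest party votes only $\langle\vote,j,v_j\rangle$; any vote for $v\neq v_j$ comes from the at most one non-broadcaster Byzantine party, which is strictly below the threshold $n-4\geq 4$, so an honest party can lock only on $v_j$. If $j$ is Byzantine, then the single non-broadcaster Byzantine slot is used up by $j$, so all $n-2$ non-broadcaster parties other than $j$ are honest and each votes at most once for $j$; locks on two distinct values $v\neq v'$ would need two disjoint sets of $n-4$ honest voters, i.e. $2(n-4)\le n-2$, forcing $n\le 6$ and contradicting $n\geq 4f=8$. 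Hence all honest locks for $j$ agree on one value, which must be the value $v$ locked by the given honest party.

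For propagation I would show that at least $n-4$ honest parties send $\langle\vote,j,v\rangle$; since honest messages are eventually delivered on the reliable channels, every honest party then receives $n-4$ such votes from distinct non-broadcaster parties other than $j$ and locks on $v$ by Step~\ref{rb3:step:vote}. When $j$ is Byzantine this is immediate: the $n-4$ voters that caused the original lock are all honest. When $j$ is honest, the locked value is $v=v_j$ by the uniqueness step, and \emph{every} one of the $n-3$ honest parties other than $j$ receives $j$'s unique ack and votes $\langle\vote,j,v_j\rangle$; since $n-3\ge n-4$, the threshold is met at all honest parties.

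The delicate point, and where I expect the only real friction, is this last count in the honest-$j$ case: a naive tally of the witnesses that produced the original lock yields only $n-5$ guaranteed honest voters, one below the threshold, so the argument cannot be localized to those witnesses. The fix is to stop counting the specific witnesses and instead observe that an honest $j$ broadcasts a single ack, which forces all $n-3$ honest non-$j$ parties to vote the same value $v_j$; coupling this with the uniqueness claim that the locked value equals $v_j$ is what closes the one-vote gap and makes $n\ge 4f$ exactly the threshold the protocol needs.
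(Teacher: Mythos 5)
Your proof is correct and follows essentially the same route as the paper's: a case split on whether $j$ is honest or Byzantine, using for honest $j$ that its single \ack forces all $n-3$ honest non-$j$ parties to vote the same value (rather than counting only the original $n-4$ witnesses, which as you note falls one short), and for Byzantine $j$ that every non-broadcaster party other than $j$ is honest so the witness votes propagate to everyone. The only addition is your explicit uniqueness step; the paper omits it here but implicitly relies on it when invoking the lemma in the Agreement argument, so making it explicit is a reasonable strengthening rather than a departure.
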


\begin{proof}
    Since an honest party locks on $v$ for party $j$, it receives $n-f-2$ \vote messages from non-broadcaster parties other than $j$. If $j$ is honest, then it sends the same \ack to all parties, and thus all honest parties receive $n-f-2$ \vote for party $j$ from non-broadcaster honest parties other than $j$. If $j$ is Byzantine, then the parties other than $j$ and the broadcaster are all honest. Since an honest party receives $n-f-2$ \vote messages from these honest parties, all honest parties will also receive the messages. Therefore, all honest parties also lock on $v$ for party $j$.
\end{proof}

\begin{theorem}\label{thm:f=2}
    The protocol in Figure~\ref{fig:f=2} solves Byzantine reliable broadcast under asynchrony with optimal resilience $n\geq 4f, f=2$ and optimal \latency of 2 rounds, and has \badlatency of 3 rounds.
\end{theorem}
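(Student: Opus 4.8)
The plan is to establish the three BRB properties in turn --- \textbf{Validity} (together with the \latency bound), \textbf{Agreement}, and \textbf{Termination} (together with the \badlatency bound) --- using the lock-consistency guarantee of Lemma~\ref{lem:f=2} and strengthening it with one auxiliary uniqueness claim. Throughout I would use that when the broadcaster is Byzantine there are at most $f-1$ Byzantine parties among the non-broadcasters, and that for $f=2,\ n\geq 4f$ the governing counting facts are $f<n-f-1$, $f<n-2f$, $2(n-f-1)-(n-1)>f-1$, and $2(n-2f)>n-1$; each of these forces the relevant quorums to intersect in an honest party. Optimality of the resilience $n\geq 4f$ for \latency $2$ is inherited from Theorem~\ref{thm:lowerbound}.

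For Validity and \latency I would take an honest broadcaster with input $v$: all $\geq n-f-1$ honest non-broadcasters send $\langle \ack, v\rangle$, so every honest party collects $n-f-1$ such \ack messages and commits $v$ at Step~\ref{rb3:step:commit1} after two rounds. To rule out committing any $v'\neq v$, I note honest parties send \ack only for $v$, so \ack messages for $v'$ number at most $f<n-f-1$ (blocking a Step~\ref{rb3:step:commit1} commit of $v'$), and only Byzantine parties can be locked as $v'$, at most $f<n-2f$ of them (blocking a Step~\ref{rb3:step:commit2} commit of $v'$).

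The core is Agreement, for which I assume a Byzantine broadcaster. First I would prove the \emph{uniqueness claim}: no honest party can \lock two distinct values for the same non-broadcaster party $j$. This follows by a quorum intersection on \vote messages --- each honest party sends at most one \vote per $j$, so two locks for $j$ would require roughly $2(n-f-2)$ distinct voters among the $n-2$ eligible parties (allowing overlap only at the $\le f-1$ Byzantine voters), which fails once $n\geq 3f+2$. Combined with Lemma~\ref{lem:f=2}, this shows that for each non-broadcaster $j$ there is at most one value locked by any honest party, and that an honest $j$ can only be locked as the value in its own \ack. I then split into the three commit-pair cases: two Step~\ref{rb3:step:commit1} commits agree by a quorum intersection of \ack messages (the intersection of size $2(n-f-1)-(n-1)>f-1$ forces a common honest acker); two Step~\ref{rb3:step:commit2} commits of $v\neq v'$ yield lock-sets $S_1,S_2$ of size $\geq n-2f$ that must be disjoint by the uniqueness claim, contradicting $2(n-2f)>n-1$; and a Step~\ref{rb3:step:commit1} commit of $v$ against a Step~\ref{rb3:step:commit2} commit of $v'$ is excluded because the $\geq n-2f$ honest parties whose \ack was for $v$ can only be locked as $v$, hence are disjoint from the $\geq n-2f$ parties locked as $v'$, again contradicting $2(n-2f)>n-1$.

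For Termination and \badlatency I again take a Byzantine broadcaster and suppose some honest party commits. If it committed at Step~\ref{rb3:step:commit1} in round $2$, then the $\geq n-2f$ honest \ack messages for $v$ reach every honest party, who send \vote messages in round $3$; since at least $n-f-2$ honest \vote messages arrive for each of those parties, every honest party locks $v$ for $\geq n-2f$ distinct parties and commits at Step~\ref{rb3:step:commit2} by round $3$. If instead the first commit was at Step~\ref{rb3:step:commit2} in round $3$, Lemma~\ref{lem:f=2} propagates all of that party's locks to every honest party within the same round, so all commit by round $3$. Either way every honest party commits at most one round after the first commit, giving \badlatency $3$ rounds. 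The main obstacle I anticipate is the uniqueness claim and, inside it, the Byzantine-$j$ subcase underpinning Lemma~\ref{lem:f=2}: this is exactly where $f=2$ is essential, since then a Byzantine $j$ together with the Byzantine broadcaster exhausts the fault budget, making every voter other than $j$ honest so that all honest parties observe the same \vote messages and lock identically.
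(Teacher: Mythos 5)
Your proof is correct and follows essentially the same route as the paper's: the same reliance on Lemma~\ref{lem:f=2}, the same case split over the two commit paths, and the same quorum-intersection counts. Your explicit uniqueness claim (no honest party locks two distinct values for the same party $j$) makes rigorous a step the paper leaves implicit when ruling out two conflicting Step~\ref{rb3:step:commit2} commits, and your treatment of termination when the first commit occurs at Step~\ref{rb3:step:commit2} covers a case the paper's write-up omits; both are welcome refinements but do not constitute a different approach.
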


\begin{proof}
    {\bf Validity and Good-case Latency.}
    
    If the broadcaster is honest, it sends the same proposal of value $v$ to all parties, and all $n-f-1$ honest non-broadcaster parties will multicast the \ack message for $v$. Since there are only $f$ Byzantine party, their \ack messages is below the $n-f-1$ threshold. Then all honest parties will commit $v$ after receiving $n-f-1$ \ack messages at Step~\ref{rb3:step:commit1} and terminate. The \latency is $2$ rounds, including broadcaster sending the proposal and all parties sending \ack message.
    
    {\bf Agreement.}
    
    If the broadcaster is honest, by validity all honest parties will commit the same value. Now consider when the broadcaster is Byzantine, and suppose there are $t>0$ Byzantine parties there are at most $t-1$ Byzantine parties among non-broadcasters.
    
    If any two honest parties commit different values at Step~\ref{rb3:step:commit1}, then there must be at least $n-f-1-(t-1)=n-f-t$ honest parties sending \ack for each of these different values. It is impossible by quorum intersection since there are only $n-t$ honest parties.
    
    Suppose any two honest parties commit different values at Step~\ref{rb3:step:commit2}. Then, there must exists at least $2(n-2f)-(n-1)\geq 1$ party for which the two committed honest parties lock different values. However, this contradicts Lemma~\ref{lem:f=2}, which states honest parties lock on the same value for any party when the broadcaster is Byzantine. 
    Hence, no two honest parties can commit different values at Step~\ref{rb3:step:commit2}.
    
    Now we show that if an honest party $h1$ commits $v$ at Step~\ref{rb3:step:commit1} and another honest party $h2$ commits $v'$ at Step~\ref{rb3:step:commit2}, then it must be $v=v'$. 
    Suppose $h1$ commits $v$ at Step~\ref{rb1:step:commit1}, then at least $n-f-1-(f-1)=n-2f$ honest non-broadcaster parties send \ack for $v$. All honest parties will lock on $v$ for these $n-2f$ non-broadcaster parties, which is a majority of the $n-1$ non-broadcaster parties. Therefore any honest party that commits $v'$ at Step~\ref{rb3:step:commit2} must have $v'=v$.
    
    {\bf Termination and Bad-case Latency.}
    If the broadcaster is honest, by validity all honest parties will commit the same value. 
    If the broadcaster is Byzantine, once an honest party commits $v$ at Step~\ref{rb3:step:commit1}, there are $n-2f$ non-broadcaster honest parties that send \ack for $v$, and all honest parties will eventually lock on $v$ for these parties after receiving the \vote messages. Therefore all honest parties will commit $v$ at Step~\ref{rb3:step:commit2} after 1 extra round.
\end{proof}

\subsection{$(2,3)$-round BRB under $n\geq 5f-1$}
In this section, we improve the resilience of 2-round BRB protocol in the previous work~\cite{imbs2016trading} from $5f+1$ to $5f-1$, while keeping the \badlatency 3 rounds.
The protocol is presented in Figure~\ref{fig:5f-1}, and the main difference compared to Imbs and Raynal~\cite{imbs2016trading} is that in Step~\ref{rb2:step:ack}, parties send \ack for $v$ if receiving $n-2f$ \ack from {\em non-broadcaster parties}, instead of from any parties as in~\cite{imbs2016trading}. 
The intuition is that when the broadcaster is Byzantine, the above set of non-broadcaster parties only contains $f-1$ Byzantine parties, and thus we can reduce the total number of parties but still ensure quorum intersection.

\begin{figure}[tb]
    \centering
    \begin{mybox}
\begin{enumerate}
    \item\label{rb2:step:propose} \textbf{Propose.} The designated broadcaster $L$  with input $v$ sends $\langle \texttt{propose}, v\rangle$ to all  parties.

    \item\label{rb2:step:ack} \textbf{Ack.} 
    \begin{itemize}
    \item When receiving the first proposal $\langle \texttt{propose}, v\rangle$ from the broadcaster,
    a party sends a \ack message for $v$ to all parties in the form of $\langle \ack, v \rangle$.
    
    \item When receiving $\langle \ack, v \rangle$ from $n-2f$ distinct {\em non-broadcaster parties}, a party sends $\langle \ack, v \rangle$ to all parties if not yet sent $\langle \ack, v \rangle$.
    \end{itemize}

    \item\label{rb2:step:commit} \textbf{Commit.}
    When receiving $\langle \ack, v \rangle$ from $n-f-1$ distinct non-broadcaster parties, a party commits $v$ and terminates.

\end{enumerate}
    \end{mybox}
    \caption{$(2,3)$-round BRB under $n\geq 5f-1$}
    \label{fig:5f-1}
\end{figure}

\begin{theorem}\label{thm:5f-1}
    The protocol in Figure~\ref{fig:5f-1} solves Byzantine reliable broadcast under asynchrony with resilience $n\geq 5f-1$ and optimal \latency of 2 rounds, and has \badlatency of 3 rounds.
\end{theorem}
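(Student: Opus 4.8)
The plan is to verify the three BRB properties together with the two latency claims, treating validity/good-case, agreement, and termination/bad-case separately; crucially, the resilience requirement $n\geq 5f-1$ will be forced entirely by the agreement argument, while the other parts only need $n>3f$.

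Validity and good-case latency are immediate. If the broadcaster is honest with input $v$, all $n-f-1$ honest non-broadcasters send $\langle\ack,v\rangle$ in round~2; since at most $f$ Byzantine parties can ack any $v'\neq v$ and the amplification threshold in Step~\ref{rb2:step:ack} is $n-2f>f$, no honest party is ever triggered to ack a conflicting value. Hence every honest party collects $n-f-1$ acks for $v$ from distinct non-broadcasters within $2$ rounds and commits at Step~\ref{rb2:step:commit}, giving good-case latency $2$.

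The heart of the proof is agreement when the broadcaster is Byzantine. Suppose there are $t\leq f$ Byzantine parties, so at most $t-1$ Byzantine and at least $n-t$ honest non-broadcasters. The key structural fact is that each honest party acts on a single first proposal, so it direct-acks (first bullet of Step~\ref{rb2:step:ack}) at most one value; thus the honest direct-acker sets of distinct values are disjoint. I would then prove the counting lemma: if an honest party commits $v$, then at least $n-2f-t+1$ honest parties direct-ack $v$. Consider the first honest party to send an \emph{amplified} ack for $v$; at that instant no honest amplified ack for $v$ exists, so the $n-2f$ acks it sees come only from honest direct-ackers of $v$ plus at most $t-1$ Byzantine parties, yielding at least $n-2f-t+1$ honest direct-ackers (and if no honest party ever amplifies $v$, the commit quorum itself supplies at least $n-f-t\geq n-2f-t+1$ honest direct-ackers, using $n-2f>t-1$ to rule out a purely Byzantine amplification). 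Applying this to two committed values $v\neq v'$ and invoking disjointness gives $2(n-2f-t+1)\leq n-t$, i.e.\ $n\leq 4f+t-2\leq 5f-2$, contradicting $n\geq 5f-1$; hence no two honest parties commit different values, and combined with validity this settles agreement.

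For termination and bad-case latency, I would argue that once an honest party commits $v$, its commit quorum contains at least $n-f-1-(t-1)=n-f-t\geq n-2f$ honest acks for $v$; these alone meet the amplification threshold, so upon receiving them every honest party amplifies and sends $\langle\ack,v\rangle$. Then all $n-t\geq n-f-1$ honest parties ack $v$, so each eventually collects $n-f-1$ acks and commits $v$. For the latency, a commit in the good-case timing rests on direct acks sent in round~2, which reach every honest party by round~2 and trigger amplification in round~2, so all honest parties commit by round~3, giving bad-case latency $3$ ($R_{ex}=1$). The main obstacle is the agreement counting lemma: correctly pinning down that a committed value must be direct-acked by at least $n-2f-t+1$ honest parties — handling the first-amplification step and the fact that honest parties may amplify several values — is exactly where the improved bound $5f-1$ (versus Imbs and Raynal's $5f+1$) arises, the saving coming precisely from counting only non-broadcaster parties in every quorum so that at most $t-1$ rather than $t$ faults appear.
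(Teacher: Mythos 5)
Your proposal is correct and follows essentially the same approach as the paper: validity and termination are handled identically, and your agreement argument rests on the same key observations (quorums count only non-broadcasters, so at most $t-1$ of the $t$ faults appear in them; a committed value forces at least $n-2f-t+1$ honest \emph{direct} ackers via the first honest amplifier; direct-ack sets for distinct values are disjoint). Your packaging is slightly cleaner -- a single counting lemma followed by $2(n-2f-t+1)\le n-t$ replaces the paper's two-case analysis of which values were honestly amplified -- but the underlying counting is the same and yields the same threshold $n\ge 5f-1$.
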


\begin{proof}
    {\bf Validity and Good-case Latency.}
    
    If the broadcaster is honest, it sends the same proposal of value $v$ to all parties, and all $n-f-1$ honest non-broadcaster parties will multicast the \ack message for $v$. Since there are only $f$ Byzantine party, their \ack messages is below the $n-f-1$ threshold. Then all honest parties will commit $v$ after receiving $n-f-1$ \ack messages at Step~\ref{rb1:step:commit1} and terminate. The \latency is $2$ rounds, including broadcaster sending the proposal and all parties sending \ack message.
    
    {\bf Agreement.}
    
    If the broadcaster is honest, by validity all honest parties will commit the same value. 
    If the broadcaster is Byzantine, and suppose there are $t>0$ Byzantine parties, then there are $t-1$ Byzantine parties among all non-broadcaster parties.
    Suppose that two honest parties commit different values $v\neq v'$, then by Step~\ref{rb2:step:commit} there are at least $n-f-1-(t-1)=n-f-t$ honest parties $A$ that send \ack for $v$ and at least $n-f-1-(t-1)=n-f-t$ honest parties $B$ that send \ack for $v'$. Since there are $n-t$ honest parties in total, $|A\cap B|\geq 2(n-f-t)-(n-t)=n-2f-t\geq 3f-t-1>0$, there must exist some honest party that sends \ack due to the second condition of Step~\ref{rb2:step:ack}. 
    If the above only happens to $v$, then there are at least $n-2f-(t-1)=n-2f-t+1$ honest parties that send \ack for $v$ due to receiving the \propose from the broadcaster. This contradicts the fact that at least $n-f-t$ honest parties send \ack for $v'$ due to receiving \propose, since $(n-2f-t+1)+(n-f-t)>n-t$. 
    It the above happens to both $v,v'$, then there are at least $n-2f-(t-1)=n-2f-t+1$ honest parties that send \ack for $v$ (and for $v'$, respectively) due to receiving the \propose from the broadcaster. This is also impossible since $2(n-2f-t+1)\geq n+f-2t+1 > n-t$. Therefore, all honest parties commit the same value.
    
    {\bf Termination and Bad-case Latency.}
    If the broadcaster is honest, by validity all honest parties will commit the same value. 
    If the broadcaster is Byzantine, once an honest party commits $v$, there are $n-2f$ non-broadcaster honest parties that send \ack for $v$. Therefore all honest parties will send \ack for $v$ and hence commit $v$ after 1 extra round.
\end{proof}

\section{Extension to Unauthenticated Byzantine Broadcast under Synchrony}
\label{sec:sync}
In this section, we extend the previous results to show the \latency results for unauthenticated Byzantine broadcast under synchrony.
It is well-known that unauthenticated Byzantine broadcast or Byzantine reliable broadcast is solvable if and only if $n\geq 3f+1$.

We adopt the synchrony model assumptions from~\cite{abraham2021goodcase}, including distinguishing the latency bounds $\delta$ and $\Delta$, and the clock assumption, briefly as follows. More details about the model assumptions can be found in~\cite{abraham2021goodcase}.

\paragraph{Network delays.}
We separate the {\em actual bound $\delta$}, and the {\em conservative bound $\Delta$} on the network delay:
\begin{itemize}[itemsep=0pt,topsep=0pt]
    \item For one execution, $\delta$ is the upper bound for message delays between any pair of honest parties, but the value of $\delta$ is {\em unknown} to the protocol designer or any party. Different executions may have different $\delta$ values. 
    \item For all executions, $\Delta$ is the upper bound for message delays between any pair of honest parties, and the value of $\Delta$ is {\em known} to the protocol designer and all parties. 
\end{itemize}

\paragraph{Clock synchronization.}
Each party is equipped with a local clock that starts counting at the beginning of the protocol execution. 
We assume the \emph{clock skew} is at most $\clockskew$, i.e., they start the protocol at most $\clockskew$ apart from each other.
We assume parties have {\em no clock drift} for convenience. 
There exist clock synchronization protocols~\cite{dolev1995dynamic, abraham2019synchronous} that guarantee a bounded clock skew of $\clockskew\leq \delta$.
Since the value of $\delta$ is unknown to the protocol designer or any party, our protocol will use $\Delta$ as the parameter for clock skew in the protocol. Note that the actual clock skew is still $\clockskew\leq \delta$, guaranteed by the clock synchronization protocols~\cite{dolev1995dynamic, abraham2019synchronous}.
In addition, due to clock skew, the BA primitive used in our BB protocol (Figure~\ref{fig:bb}) needs to tolerate up to $\clockskew$ clock skew.
For instance, any synchronous lock-step BA can do so by using a clock synchronization algorithm \cite{dolev1995dynamic, abraham2019synchronous} to ensure at most $\Delta$ clock skew, and setting each round duration to be $2\Delta$ to enforce the abstraction of lock-step rounds.

\begin{theorem}\label{thm:sync}
    Any unauthenticated Byzantine reliable broadcast protocol under $3f+1\leq n \leq 4f-1$ must have a \latency of at least $3\delta$ under synchrony.
\end{theorem}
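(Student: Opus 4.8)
The plan is to reduce this timed synchronous statement to the round-based lower bound of Theorem~\ref{thm:lowerbound}, whose proof is already carried out in a synchronous model in which successive rounds of messages are delivered exactly $\delta$ apart. The key translation is that if the adversary delivers every honest-to-honest message after exactly $\delta$ time and (as in Theorem~\ref{thm:lowerbound}) all parties start simultaneously --- a choice of clock skew that only strengthens the lower bound, since zero skew is the most favorable timing for the protocol and is permitted whenever $\clockskew \le \delta$ --- then a message received at time $r\delta$ is precisely a ``round-$r$'' message, i.e.\ its content can depend only on messages received by time $(r-1)\delta$.

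First I would suppose, for contradiction, that some BRB protocol $\Pi$ tolerating $f$ faults with $3f+1 \le n \le 4f-1$ achieves \latency strictly below $3\delta$. I fix the uniform schedule in which every honest message incurs delay exactly $\delta$. Under this schedule the broadcaster's round-$1$ messages arrive at time $\delta$, the responses they trigger (round $2$) arrive at $2\delta$, and any message whose content depends on a round-$2$ message can be sent no earlier than $2\delta$ and hence received no earlier than $3\delta$. Consequently, an honest party that commits before time $3\delta$ must decide using only round-$1$ and round-$2$ messages, that is, ``after receiving two rounds of messages but before receiving any message from the third round.'' This is exactly the hypothesis under which Theorem~\ref{thm:lowerbound} derives a contradiction, so I would invoke its four-execution indistinguishability construction verbatim: with honest broadcaster $s$ and groups $A,B,C,D$ of sizes $f,f-1,f-1,f$, group $B$ commits $0$ by the end of round $2$ in Execution~3, group $C$ commits $1$ by the end of round $2$ in Execution~4, while $A,D$ cannot distinguish Executions~3 and~4, forcing a violation of agreement or termination.

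Since the construction of Theorem~\ref{thm:lowerbound} already delivers all messages with delay $\delta$ and relies only on first- and second-round messages for the commit decisions, the indistinguishability arguments transfer to the timed setting without change. The main thing to verify carefully is precisely this timing bookkeeping --- that under the uniform $\delta$-delay schedule no ``third-round'' information can reach a party before time $3\delta$, so that committing before $3\delta$ genuinely corresponds to the two-round regime --- after which the combinatorial heart of the argument is inherited directly from Theorem~\ref{thm:lowerbound}. Finally, because every Byzantine broadcast protocol is in particular a Byzantine reliable broadcast protocol (its stronger termination implies BRB termination, while its agreement and validity coincide), the same $3\delta$ lower bound applies to synchronous BB as well, covering the corresponding entry of Table~\ref{table:results}.
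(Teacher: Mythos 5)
Your proposal is correct and matches the paper's intent exactly: the paper omits this proof, stating only that it is ``analogous to that of Theorem~\ref{thm:lowerbound},'' and your reduction---fixing the uniform exact-$\delta$ delivery schedule with zero start skew so that committing before time $3\delta$ coincides with committing on only two rounds of messages, then invoking the four-execution construction verbatim---is precisely that analogy made explicit. The timing bookkeeping you flag (no round-3 information can arrive strictly before $3\delta$ under this schedule) is the only genuinely new content needed, and you handle it correctly.
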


The proof of Theorem~\ref{thm:sync} is analogous to that of Theorem~\ref{thm:lowerbound}, and is omitted here for brevity.
Next, we show a synchronous BB protocol in Figure~\ref{fig:bb} that has \latency of $2\delta$ under $n\geq 4f$.


\paragraph{Protocol description.}
The protocol is presented in Figure~\ref{fig:bb}, and is inspired by our $(2,4)$-round asynchronous BRB protocol (Figure~\ref{fig:brb}) from Section~\ref{sec:async:protocol}. The main idea is to add a Byzantine agreement at the end of the protocol to ensure termination, since BRB does not require termination when the broadcaster is Byzantine. The input of the BA is called \lock, which is set to be some default value $\bot$ initially, and will be set when commit in Step~\ref{bb:step:commit} or receiving enough \vote in Step~\ref{bb:step:vote}. One guarantee implied by the $(2,4)$-round BRB protocol is that, when any honest party commit $v$ in Step~\ref{bb:step:commit}, all honest parties will lock on $v$, and therefore the BA will only output $v$.

\begin{figure}[tb]
    \centering
    \begin{mybox}
Initially, every party $i$ starts the protocol at most $\delta$ time apart with a local clock and sets $\lock=\bot$, $\clockskew=\Delta$.

\begin{enumerate}
    \item\label{bb:step:propose} \textbf{Propose.} The designated broadcaster $L$  with input $v$ sends $\langle \texttt{propose}, v\rangle$ to all  parties.

    \item\label{bb:step:ack} \textbf{Ack.} 
    When receiving the first proposal $\langle \texttt{propose}, v\rangle$ from the broadcaster,
    a party sends an \ack message for $v$ to all parties in the form of $\langle \ack, v \rangle$.

    \item\label{bb:step:commit} \textbf{Commit.}
    When receiving $\langle \ack, v \rangle$ from $n-f-1$ distinct non-broadcaster parties at time $t$, a party sets $\lock=v$. If $t\leq 2\Delta+\clockskew$, the party commits $v$.
    
    \item\label{bb:step:vote} \textbf{Vote.} 
    \begin{itemize}
        \item When receiving $\langle \ack, v \rangle$ from $n-2f$ distinct non-broadcaster parties, a party sends a \vote message for $v$ to all parties in the form of $\langle \vote, v \rangle$ if not yet sent \vote.
        \item When receiving $\langle \vote, v \rangle$ from $n-f-1$ distinct non-broadcaster parties, a party sets $\lock = v$.
    \end{itemize}

    \item\label{bb:step:ba} \textbf{Byzantine agreement.}
    At local time $3\Delta+2\clockskew$, a party invokes an instance of Byzantine agreement with \lock as the input. If not committed, the party commits on the output of the Byzantine agreement. Terminate.

\end{enumerate}
    \end{mybox}
    \caption{$2\delta$ unauthenticated BB protocol under $n\geq 4f$}
    \label{fig:bb}
\end{figure}

\begin{theorem}\label{thm:ub:sync}
    The protocol in Figure~\ref{fig:bb} solves Byzantine broadcast under synchrony with optimal resilience $n\geq 4f$ and optimal \latency of $2\delta$.
\end{theorem}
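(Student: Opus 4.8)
The plan is to follow the same skeleton as the proof of Theorem~\ref{thm:ub:async} for the asynchronous $(2,4)$-BRB, handling termination via the appended Byzantine agreement. Concretely, I would first establish \emph{validity and the $2\delta$ good-case latency}: with an honest broadcaster holding $v$, every honest party receives the \propose within $\delta$ and sends $\langle \ack, v \rangle$, so every honest party collects $n-f-1$ matching acks by global time $2\delta$ after the broadcaster starts. Translating to a local clock, this reception happens at local time at most $2\delta+\clockskew\le 2\Delta+\clockskew$, so the guard $t\le 2\Delta+\clockskew$ in Step~\ref{bb:step:commit} fires and every honest party commits $v$ within $2\delta$; the $f$ Byzantine acks are below every threshold, so no honest party is driven to \vote or \lock a competing value.

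Next I would prove the safety lemma that is the exact synchronous analogue of Lemma~\ref{lem:ub:async}: if some honest party sets $\lock=v$ and commits in Step~\ref{bb:step:commit}, then no honest party ever sends $\langle \vote, v' \rangle$ for any $v'\ne v$. The argument is identical quorum counting: a fast commit requires $n-f-1$ acks for $v$, hence at least $n-f-t$ honest acks for $v$ (with $t\le f$ Byzantine); an honest \vote for $v'$ would require $n-2f$ acks for $v'$, forcing an overlap of at least $n-3f-t+1\ge n-4f+1\ge 1$ honest party acking both values, a contradiction since each honest party acks only its first proposal. With this lemma, \emph{agreement} splits into three cases. Two fast commits must agree by quorum intersection exactly as in Theorem~\ref{thm:ub:async} (using $n\ge 4f$). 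Two commits taken from the BA output agree by the agreement property of the Byzantine agreement instance. The mixed case --- $h_1$ fast-commits $v$ while $h_2$ commits the BA output $v'$ --- is where the work lies: I would show every honest party inputs $v$ to the BA, so BA validity forces $v'=v$.

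The main obstacle is precisely this mixed case, i.e.\ that a single fast commit of $v$ forces \emph{every} honest party to have $\lock=v$ before it invokes the BA. By the safety lemma no honest party ever votes, hence ever locks, a value $\ne v$, so it suffices to show all honest parties actually reach $\lock=v$ in time. Here I track the two-hop propagation: the $\ge n-2f$ honest acks for $v$ that $h_1$ used are broadcast, so within one message delay every honest party holds $n-2f$ acks for $v$ and sends $\langle \vote, v \rangle$; within a second message delay every honest party holds $n-f-1$ such votes and sets $\lock=v$. The delicate point is the clock bookkeeping: I must verify that the interval between the fast-commit guard (local time $2\Delta+\clockskew$) and the BA-invocation time (local time $3\Delta+2\clockskew$) is long enough to cover these two message delays \emph{measured with the conservative bound $\Delta$} plus the inter-party clock skew, so that the propagation completes on every honest local clock strictly before Step~\ref{bb:step:ba} samples $\lock$. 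This is the one step where the $\delta$-versus-$\Delta$ separation and the skew parameter $\clockskew$ must be charged carefully, and making the constants line up is the crux of the proof.

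Finally, \emph{termination} follows because the Byzantine agreement (instantiated to tolerate up to $\clockskew$ skew, as discussed preceding the protocol) terminates under $n\ge 4f\ge 3f+1$: every honest party either has already fast-committed or commits the BA output at Step~\ref{bb:step:ba} and then terminates. \emph{Resilience} $n\ge 4f$ is used in the two quorum-intersection arguments (fast-commit agreement and the safety lemma) and subsumes the $n\ge 3f+1$ needed for the unauthenticated BA; optimality of both the resilience and the $2\delta$ \latency follows from Theorem~\ref{thm:sync} together with the $2\delta$ lower bound cited in Table~\ref{table:results}.
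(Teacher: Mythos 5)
Your skeleton matches the paper's proof: validity and the $2\delta$ bound from the honest-broadcaster ack quorum, agreement split into fast/fast (quorum intersection), BA/BA (BA agreement), and the mixed case resolved by showing every honest party feeds $\lock=v$ into the BA so that BA validity pins the output, with termination delegated to the BA. The quorum-counting lemma you state is exactly the synchronous analogue of Lemma~\ref{lem:ub:async} and the paper proves the same fact inline (no conflicting value can ever gather $n-2f$ acks).

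The genuine gap is in the step you yourself flag as the crux. You route the lock through two hops --- acks propagate, everyone sends \vote, then $n-f-1$ votes propagate and set $\lock$ --- and defer checking that this fits between the fast-commit deadline and the BA invocation. It does not: the fast commit is gated at local time $2\Delta+\clockskew$ and the BA is invoked at local time $3\Delta+2\clockskew$, so the available window is $\Delta+\clockskew$, i.e.\ one conservative message delay plus one skew, whereas your path needs two message delays. The paper's proof closes this with a \emph{one-hop} argument: the $\geq n-2f$ honest acks behind the fast commit were already multicast, so every honest party holds them by local time $(2\Delta+\clockskew)+\Delta+\clockskew=3\Delta+2\clockskew$, and --- combined with the fact that no value $v'\neq v$ can ever reach the $n-2f$ ack threshold --- this is what sets $\lock=v$ at every honest party just in time. (The paper here treats the $n-2f$-ack event as sufficient to lock, which is the reading under which the constants in Figure~\ref{fig:bb} line up; under your literal ack$\to$vote$\to$lock reading the deadline $3\Delta+2\clockskew$ is too early by roughly $\Delta+\clockskew$.) As written, your mixed-case argument would fail the very arithmetic you postpone, so the proof is incomplete at its decisive step.
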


\begin{proof}
    {\bf Validity and Good-case Latency.}
    If the broadcaster is honest, it proposes the same value $v$ to all parties, and all honest parties will send \ack for $v$.
    Then at Step~\ref{bb:step:commit}, all honest parties receive $n-f-1$ \ack messages of $v$ after $2\delta$ time (which is before local time $2\Delta+\clockskew$), and commits $v$.
    
    {\bf Agreement.}
    If all honest parties commit at Step~\ref{bb:step:ba}, all honest parties commit on the same value due to the agreement property of the BA.
    Otherwise, there must be some honest party that commits at Step~\ref{bb:step:commit}.
    First, no two honest parties can commit different values at Step~\ref{bb:step:commit} due to quorum intersection.
    Now suppose any honest party $h$ that commits $v$ at Step~\ref{bb:step:commit}. If the broadcaster is honest, by validity, all honest parties commits $v$. If the broadcaster is Byzantine, then there are $f-1$ Byzantine parties among non-broadcasters.
    Since $h$ receives $n-f-1$ \ack messages from non-broadcasters, at least $n-f-1-(f-1)=n-2f$ of them are from honest parties. Then, all honest parties receive these $n-2f$ \ack messages and set $\lock =v$ at their local time $\leq (2\Delta+\clockskew)+\Delta+\clockskew= 3\Delta+2\clockskew$, before invoking the Byzantine agreement primitive at Step~\ref{bb:step:ba}, since the clock skew is $\clockskew$ and message delay is bounded by $\Delta$. Also by quorum intersection, there cannot be $n-2f$ \ack messages for $v'\neq v$, since the set of $(n-2f)-(f-1)=n-3f+1$ honest parties who voted for $v'$ and the set of $n-2f$ honest parties who voted for $v$ intersect at $\geq (n-3f+1)+(n-2f)-(n-f)\geq 1$ honest parties.
    Therefore, at Step~\ref{bb:step:ba}, all honest parties have the same input $\lock=v$ to the BA. Then by the validity condition of the BA primitive, the output of the agreement is also $v$. Any honest party that does not commit at Step~\ref{bb:step:commit} will commit $v$ at Step~\ref{bb:step:ba}. 
    
    {\bf Termination.}
    According to the protocol, honest parties terminate at Step~\ref{bb:step:ba}, and they commit a value before termination.
    
\end{proof}

\section{Related Work}

Byzantine fault-tolerant broadcast, first proposed by Lamport et al. \cite{lamport1982byzantine}, have received a significant amount of attention for several decades. 
Under synchrony, the deterministic Dolev-Strong protocol \cite{dolev1983authenticated} solves Byzantine broadcast in worst-case $f+1$ rounds, matching a lower bound~\cite{fischer1982lower}. 
Under asynchrony, Byzantine broadcast is unsolvable even with a single failure. Byzantine reliable broadcast relaxes the termination property of Byzantine broadcast, and the classic Byzantine reliable broadcast by Bracha~\cite{bracha1987asynchronous} has a \latency of 3 rounds and \badlatency of 4 rounds with optimal resilience $n\geq 3f+1$. 
Later works improves the \latency of reliable broadcast to 2 rounds by trading off resilience~\cite{imbs2016trading} or using authentication (signatures)~\cite{abraham2021goodcase}.
A recent line of work studies the \latency of authenticated BFT protocols, including~\cite{synchotstuff, abraham2020brief, abraham2021goodcase}. 

\section{Conclusion}
In this paper, we investigate the \latency of unauthenticated Byzantine fault-tolerant broadcast, which is time for all honest parties to commit given that the broadcaster is honest. We show the tight results are 2 rounds under $n\geq 4f$ and 3 rounds under $3f+1\leq n \leq 4f-1$ for asynchronous Byzantine reliable broadcast, which can be extended for synchronous Byzantine broadcast as well.
In addition, we also study the \badlatency for asynchronous BRB which measures how fast can all honest parties commit when the broadcaster is dishonest and some honest party commits. We show 2 impossibility results and 4 matching asynchronous BRB protocols, including $(2,4)$-BRB under $n\geq 4f$, F2-BRB of $(2,3)$-round under $n\geq 4f,f=2$, F1-BRB of $(2,2)$-round under $n\geq 4f, f=1$, and $(2,3)$-BRB under $n\geq 5f-1$.


\section*{Acknowledgments}
The authors would like to thank Kartik Nayak for helpful discussions related to the paper.



\bibliography{ref}

\clearpage

\appendix

\section{$3\delta$ Unauthenticated Byzantine Broadcast under Synchrony}
\label{sec:sync:3r}
For completeness, we show an unauthenticated BB protocol in Figure~\ref{fig:3rbb} with \latency of $3\delta$ under synchrony and $n\geq 3f+1$, inspired by Bracha's reliable broadcast~\cite{bracha1987asynchronous}. 

\begin{theorem}\label{thm:sync:3r}
    The protocol in Figure~\ref{fig:3rbb} solves Byzantine broadcast under synchrony with resilience $n\geq 3f+1$ and \latency of $3\delta$.
\end{theorem}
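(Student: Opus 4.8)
The plan is to follow the same three-part template used for Theorem~\ref{thm:ub:sync}, proving Validity together with the \latency, then Agreement, then Termination, now using the Bracha-style quorum thresholds~\cite{bracha1987asynchronous} that remain sound at the optimal resilience $n\geq 3f+1$. As in Figure~\ref{fig:bb}, I expect the protocol in Figure~\ref{fig:3rbb} to consist of a Bracha reliable-broadcast core (\echo/\vote amplification that sets a \lock value) together with a fast $3\delta$ commit path and a final Byzantine agreement invoked at a fixed local time to guarantee termination when the broadcaster is faulty.

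For Validity and Good-case Latency, I would trace the three message rounds assuming an honest broadcaster: within $\delta$ every honest party receives $\langle\propose,v\rangle$ and sends its \ack; within another $\delta$ each honest party collects enough honest acknowledgements and sends its \vote for $v$; and within a third $\delta$ each honest party collects enough votes and reaches the fast commit, for a total of $3\delta$. Since $f$ is strictly below every quorum threshold, the Byzantine parties cannot trigger any message for $v'\neq v$, so the committed value is $v$. The only extra check, exactly as in Theorem~\ref{thm:ub:sync}, is that this commit at real time $3\delta$ falls before the fixed local-time deadline of the fast path (a bound expressed through $\Delta$ and $\clockskew$, using $\delta\le\Delta$ and $\clockskew\le\delta$), so that honest parties take the fast path rather than deferring to the BA.

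For Agreement, I would split on whether some honest party commits on the fast path. If none does, every honest party commits the BA output and agreement follows from the agreement property of the BA primitive. If some honest party commits $v$ on the fast path, I would establish a lock-consistency statement analogous to Lemma~\ref{lem:ub:async}: such a commit (i) prevents any honest party from ever setting \lock to some $v'\neq v$, and (ii) forces every honest party to set $\lock=v$ before the fixed BA-invocation time. Part~(i) is a quorum-intersection argument over the \vote messages; part~(ii) combines the same intersection bound with the timing guarantee (delays $\le\Delta$, skew $\le\clockskew$) so that enough honest votes for $v$ reach every party in time. Hence all honest parties input $\lock=v$ to the BA, whose validity forces output $v$, and no honest party commits any $v'\neq v$ either on the fast path or through the BA. Termination is immediate, since every honest party commits and terminates at the BA step.

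The main obstacle will be the lock-consistency and quorum-intersection bookkeeping at the tight resilience $n\geq 3f+1$, where the \echo and \vote thresholds are tighter than in the $n\geq 4f$ analysis of Figure~\ref{fig:brb}: the intersection margin can shrink to a single honest party, so each threshold must be set so that two conflicting quorums still share an honest party while a single fast-path commit still propagates a matching \lock to every honest party before the BA deadline $3\Delta+2\clockskew$.
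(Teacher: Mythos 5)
Your outline matches the paper's intent exactly: the paper omits this proof, stating only that it is analogous to that of Theorem~\ref{thm:ub:sync}, and your three-part template (validity and latency by tracing the three $\delta$-rounds, agreement via a Bracha-style lock-consistency and quorum-intersection argument at $n\geq 3f+1$, termination via the final BA step) is precisely that analogous argument. Two small slips to fix when writing it out: the second-round messages in Figure~\ref{fig:3rbb} are \echo (not \ack) messages, and the BA in this protocol is invoked at local time $4\Delta+2\clockskew$ (with fast-commit deadline $3\Delta+\clockskew$), not at $3\Delta+2\clockskew$ as in Figure~\ref{fig:bb}.
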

The correctness proof is similar to that of Theorem~\ref{thm:ub:sync}, and we omit it here for brevity.

\begin{figure}[h]
    \centering
    \begin{mybox}
Initially, every party $i$ starts the protocol at most $\delta$ time apart with a local clock and sets $\lock=\bot$, $\clockskew=\Delta$.

\begin{enumerate}
    \item\label{bb2:step:propose} \textbf{Propose.} The designated broadcaster $L$  with input $v$ sends $\langle \texttt{propose}, v\rangle$ to all  parties.

    \item\label{bb2:step:echo} \textbf{Echo.} 
    When receiving the first proposal $\langle \texttt{propose}, v\rangle$ from the broadcaster,
    a party sends an \echo message for $v$ to all parties in the form of $\langle \echo, v \rangle$.

    \item\label{bb2:step:vote} \textbf{Vote.} 
    \begin{itemize}
        \item When receiving $\langle \echo, v \rangle$ from $n-f$ distinct parties, a party sends a \vote message for $v$ to all parties in the form of $\langle \vote, v \rangle$ and sets $\lock = v$ if not yet sent \vote.
        \item When receiving $\langle \vote, v \rangle$ from $f+1$ distinct parties, a party sends a \vote message for $v$ to all parties in the form of $\langle \vote, v \rangle$ and sets $\lock = v$ if not yet sent \vote.
    \end{itemize}

    \item\label{bb2:step:commit} \textbf{Commit.}
    When receiving $\langle \vote, v \rangle$ from $n-f$ distinct parties at time $t$, a party sets $\lock=v$. If $t\leq 3\Delta+\clockskew$, the party commits $v$.
    
    \item\label{bb2:step:ba} \textbf{Byzantine agreement.}
    At local time $4\Delta+2\clockskew$, a party invokes an instance of Byzantine agreement with \lock as the input. If not committed, the party commits on the output of the Byzantine agreement. Terminate.

\end{enumerate}
    \end{mybox}
    \caption{$3\delta$ unauthenticated BB protocol under synchrony and $n\geq 3f+1$}
    \label{fig:3rbb}
\end{figure}

\end{document}